\newtheorem{theorem}{Theorem}%
\newaliascnt{lemma}{theorem}
\newaliascnt{claim}{theorem}
\newtheorem{claim}[claim]{Claim}%
\newaliascnt{corollary}{theorem}
\newaliascnt{proposition}{theorem}
\newtheorem{proposition}[proposition]{Proposition}%
\newaliascnt{remark}{theorem}
\newaliascnt{result}{theorem}
\newaliascnt{algo}{procedure}
\newtheorem{definition}{Definition}
\newcommand{\AutoAdjust}[3]{\mathchoice{ \left #1 #2  \right #3}{#1 #2 #3}{#1 #2 #3}{#1 #2 #3} }
\newcommand{\Xcomment}[1]{{}}
\newcommand{\InBrackets}[1]{\AutoAdjust{[}{#1}{]}}
\newcommand{\Ex}[2][]{\operatorname{\mathbf E}_{#1}\InBrackets{#2}}
\newcommand{\Prx}[2][]{\operatorname{\mathbf{Pr}}_{#1}\InBrackets{#2}}
\def\prob{\Prx}
\def\expect{\Ex}
\newcommand{\vect}[1]{\ensuremath{\mathbf{#1}}}
\newcommand{\R}{\mathbb{R}}
\newcommand{\diff}{{\,\mathrm{d}}}
\newcommand{\bx}{{\vect{x}}}
\newcommand{\bz}{{\vect{z}}}
\newcommand{\bzmi}[1][i]{{\bz_{\text{-}#1}}}
\newcommand{\bbarz}{{\overline{\vect{z}}}}
\newcommand{\bhatz}{{\hat{\vect{z}}}}
\newcommand{\ind}{{\mathds{1}}}
\newcommand{\auction}{\mathcal{A}}
\newcommand{\nbidder}{n}
\newcommand{\val}{v}
\newcommand{\vali}[1][i]{\val_{#1}}
\newcommand{\vals}{\vect{\val}}
\newcommand{\valsmi}[1][i]{\vals_{\text{-}#1}}
\newcommand{\support}{\Lambda}
\newcommand{\F}{{\mathcal{F}}}
\newcommand{\M}{{\mathcal{M}}}
\newcommand{\proj}{{_{\downarrow}}}
\newcommand{\proji}[1][i]{{\proj_{#1}}}
\newcommand{\weight}{w}
\newcommand{\tight}{\Gamma}
\newcommand{\bn}{\mathbf{B}}
\newcommand{\price}{p}
\newcommand{\pricei}[1][i]{{\price_{#1}}}
\newcommand{\bid}{b}
\newcommand{\bidi}[1][i]{{\bid_{#1}}}
\newcommand{\bids}{\vect{\bid}}
\newcommand{\bidsmi}[1][i]{{\bids_{\text{-}#1}}}
\newcommand{\hbid}{h}
\newcommand{\hbids}{\vect{\hbid}}
\newcommand{\hbidsmi}[1][i]{{\hbids_{\text{-}#1}}}
\newcommand{\wbid}{d}
\newcommand{\wbids}{\vect{\wbid}}
\newcommand{\wbidsmi}[1][i]{{\wbids_{\text{-}#1}}}
\newcommand\maxV{\mbox{{\sc maxV}}}
\newcommand\LS{\mbox{{\sc LS}}}
\title{Optimal Competitive Auctions}
\author{
Ning Chen\thanks{Nanyang Technological University, Singapore. Email: \texttt{ningc@ntu.edu.sg}.}
\and
Nick Gravin\thanks{Microsoft Research. Email: \texttt{ngravin@microsoft.com}.}
\and
Pinyan Lu\thanks{Microsoft Research. Email: \texttt{pinyanl@microsoft.com}.
}
}
\date{}
\begin{document}

\maketitle

\begin{abstract}

We study the design of truthful auctions for selling identical items in unlimited supply (e.g., digital goods) to $\nbidder$ unit demand buyers. This classic problem stands out from profit-maximizing auction design literature as it requires no probabilistic assumptions on buyers' valuations and employs the framework of competitive analysis. Our objective is to optimize the worst-case performance of an auction, measured by the ratio between a given benchmark and revenue generated by the auction.

We establish a sufficient and necessary condition that characterizes competitive ratios for {\em all} monotone benchmarks.
The characterization identifies the worst-case distribution of instances and reveals intrinsic relations between competitive ratios and benchmarks in the competitive analysis. With the characterization at hand, we show optimal competitive auctions for two natural benchmarks.

The most well-studied benchmark $\F^{(2)}(\cdot)$ measures the envy-free optimal revenue where at least two buyers win. Goldberg et al.~\cite{GoldbergHKS04} showed a sequence of lower bounds on the competitive ratio for each number of buyers $\nbidder$. They conjectured that all these bounds are tight. We show that optimal competitive auctions match these bounds. Thus, we confirm the conjecture and settle a central open problem in the design of digital goods auctions. As one more application we examine another economically meaningful benchmark, which measures the optimal revenue across all limited-supply Vickrey auctions. We identify the optimal competitive ratios to be $(\frac{\nbidder}{\nbidder-1})^{\nbidder-1}-1$ for each number of buyers $\nbidder$, that is $e-1$ as $\nbidder$ approaches infinity.

\end{abstract}

\setcounter{page}{0}\thispagestyle{empty}
\newpage


\section{Introduction}\label{sec:introduction}

A central question in auction theory is to design optimal truthful auctions that maximize
the revenue of the auctioneer. A truthful auction must be incentive compatible with the selfish
behavior of bidders and encourage them to reveal their private information truthfully.
In the classic single-parameter setting, the auctioneer sells multiple copies of an item to $n$
unit-demand bidders, each bidder having a private value $\vali$ for the item.
The revenue maximization problem in economics is traditionally analyzed in the Bayesian framework.
The seminal Myerson's auction~\cite{Myerson81} provides an optimal design that extracts the maximum expected revenue for a given distribution of values.

While Myerson's auction gives an optimal design in the Bayesian framework, in many scenarios determining or estimating the prior distribution in advance is impossible. Without prior information on the distribution, how should an auction be designed? The auctioneer not being certain about the prior distribution is likely to resort to truthful auctions that generate a good revenue on every possible instance of bidders' valuations. We therefore employ the worst-case  competitive analysis, which characterizes the worst-case performance of an auction across all possible bidders' valuations. We follow the framework summarized in \cite{GoldbergHKSW06} where a truthful auction is required to generate a profit comparable to the value of a certain economically meaningful benchmark on every input of values. The motivation of this framework comes from the analysis of on-line algorithms, where the performance of an on-line algorithm, which is unaware of the future, is measured in terms of the performance of the optimal off-line algorithm that knows the future. The assumption that an on-line algorithm does not know the future in advance corresponds to the assumption that an auction does not know the bidders’ valuations in advance. How well an auction can perform in the worst-case? We will answer this question in the present paper.

A particularly interesting single-parameter setting is {\em digital goods auctions} in which the number of units for sale is unlimited or is greater than or equal to the number of bidders. Digital goods auctions are motivated by the applications of selling digital goods like downloadable software on the Internet or pay-per-view television where there is a negligible cost for producing a copy of the item. The design of the optimal Bayesian auction for digital goods is
trivial: given a prior distribution of values, one may independently offer each bidder a fixed price $p_i$ tailored to the value's distribution $\vali$ so as to maximize the expected revenue $p_i\cdot \mathbf{Pr}[v_i\ge p_i]$. From the worst-case perspective, digital goods auctions attracted considerable attention over the last decade \cite{GoldbergHW01,FiatGHK02,GoldbergH03,GoldbergHKS04,HartlineM05,FeigeFHK05,AlaeiMS09,IchibaI10}. However, the optimal design was still unknown. The question about designing optimal digital goods auctions remained widely open and is fundamental to our understanding of the optimal competitive auction design in the general single-parameter setting.

Formally, in the competitive analysis framework, 
given a benchmark function $f(\cdot)$, the {\em competitive ratio} (with respect to $f(\cdot)$) of a truthful auction $\auction$ is defined as $\max_{\vals} \frac{f(\vals)}{\Ex{\auction(\vals)}}$, where $\Ex{\auction(\vals)}$ is the expected revenue of the auction $\mathcal{A}$ on valuation vector $\vals=(v_1,\ldots,v_n)$. The objective is to design an auction that minimizes the competitive ratio with respect to a given benchmark.

We note that a number of functions can serve as meaningful benchmarks and that they may have different optimal competitive ratios. Such flexibility in the choice of a target benchmark function may prove to be helpful for modeling different objectives of the auctioneer. Conversely, it may allow us to make meaningful conclusions and prescriptions about target benchmarks. However, most of the work done along the line of competitive analysis in algorithmic game theory~\cite{AGT-book} is devoted to the competitive analysis of a well-motivated but fixed concrete benchmark function. Namely, while there are basic guidelines for choosing a good benchmark function (e.g., the benchmark should have a strong economic motivation and should match the performance of the optimal auction as closely as possible), there are no formal criteria for distinguishing different benchmarks. In this paper, instead of justifying what a good benchmark is, we focus on the intrinsic relation between benchmarks and their corresponding competitive ratios. We give a complete characterization for almost {\em all} possible benchmark functions.

\medskip
\noindent \textbf{Theorem 1.} (Characterization) For any non-negative and monotonically increasing function $f(\cdot)$, there is a truthful digital goods auction that achieves a competitive ratio of $\lambda$ with respect to $f(\cdot)$ if and only if
\begin{equation}\label{eq-intro-char}
\int\limits_{S}f(\vals)\cdot\weight(\vals)\diff\vals\le \lambda\cdot\sum_{i=1}^n \ \int\limits_{\ S\proji}\weight(\valsmi)\diff\valsmi,
\end{equation}
where $S$ is any upward closed set in the support of valuation vectors, $S\proji$ is the projection of $S$ along the $i$-th coordinate, $\weight(\vals)= \prod_{i} 1 / \vali^2$, and $\valsmi = (\vali[1],\ldots,\vali[i-1],\vali[i+1],\ldots,\vali[n])$.
\medskip

The fact that the weight function $\weight(\vals)= \prod_{i} 1 / v_i^2$ appears in the above theorem is not a coincidence. The corresponding single-parameter distribution with the density function $\weight(\vali) = 1/\vali^2$ for $\vali\in[1,\infty)$ is a common tool to provide bounds on the performance of auctions. It is called {\em equal-revenue distribution}, as it enjoys a remarkable property that if the value of a bidder $i$ is drawn from this distribution, then any fixed price $p_i$ offered to $i$ generates the same expected revenue $p_i\cdot \mathbf{Pr}[v_i\ge p_i]=1$. In particular, the product $\weight(\vals)$ of independent equal-revenue distributions $\weight(\vali)$ was used in~\cite{GoldbergHKS04} to obtain the best known lower bounds on the competitive ratios of digital goods auctions; it has also been used in the auction analysis of other models, e.g., in~\cite{HN12}.

As we know from Yao's minimax principle, for any benchmark function $f(\cdot)$, there is a distribution of instances such that on average for that distribution no truthful auction can beat the worst-case competitive ratio with respect to $f(\cdot)$. The inequality \eqref{eq-intro-char} in the theorem states that the equal revenue distribution is indeed the worst-case distribution to estimate the competitive ratio of an auction.
In other words, in the context of digital goods auctions, the worst-case distribution can be described only by its support, and the actual density of the distribution is given by the equal-revenue distribution. The theorem gives a sufficient and necessary condition for a benchmark function $f(\cdot)$ to admit a competitive ratio of $\lambda$. It implies that the optimal competitive ratio with respect to $f(\cdot)$ is the smallest value of $\lambda$ for which the inequality \eqref{eq-intro-char} holds. The theorem indicates that all benchmarks and their competitive ratios are tied to the equal revenue distribution: it is the worst distribution not only for the benchmark considered in \cite{GoldbergHKS04} but also for {\em all} monotone benchmarks.


We note that our characterization is provided by the set of inequalities \eqref{eq-intro-char} that only involves a function $f(\cdot)$ and a ratio $\lambda$, but does not describe an actual auction. This is similar in spirit to the characterization of truthfully implementable allocation functions, where an allocation function is {\em truthfully implementable} if there exists a payment function that makes the allocation function truthful. To characterize a truthfully implementable allocation function, one uses a set of inequalities such as weak-monotonicity \cite{SY05,BikchandaniCLMNS06,ABHM10} that only specifies an allocation function but says nothing about payments. Hence, without describing any payment function, one can determine whether there is a truthful auction with a specified allocation. Our characterization of benchmarks with a given competitive ratio shares a similar philosophy: the condition determines whether there exists a truthful auction with a certain completive ratio with respect to $f(\cdot)$, but does not explicitly describe the auction.


The characterization theorem provides us with a powerful tool to analyze the optimal competitive ratios of auctions for different benchmarks.
We first consider the most well-studied benchmark $\F^{(2)}$ introduced in~\cite{GoldbergHW01}, which is defined as the maximum revenue achieved in an envy-free allocation provided that at least two bidders receive the item.
We next study another natural benchmark, denoted by $\maxV$, which is defined as the maximal revenue of the $k$-item Vickrey auction across all possible values of $k$. We have the following results on the optimal competitive ratios for these two benchmarks.



\medskip
\noindent \textbf{Theorem 2.} There are truthful digital goods auctions that achieve the optimal competitive ratios of $\lambda_n$ and $\gamma_n$ for any $n\ge2$ with respect to the benchmarks $\F^{(2)}$ and $\maxV$, respectively, where
\[\lambda_\nbidder = 1-\sum_{i=2}^\nbidder \left(\frac{-1}{\nbidder}\right)^{i-1} \frac{i}{i-1} {\nbidder-1 \choose i-1} \mbox{ \ \ and \ \ }
\gamma_n = \left(\frac{n}{n-1}\right)^{n-1}-1.\]

It was shown in \cite{GoldbergHKS04} that $\lambda_\nbidder$ gives a lower bound on the competitive ratio of any auction with respect to $\F^{(2)}$ for any $\nbidder\ge 2$. These lower bounds were obtained by calculating the expected value of the benchmark $\F^{(2)}(\vals)$ when $\vals$ is drawn from the equal revenue distribution. In particular, $\lambda_2=2$, $\lambda_3=13/6$, and in general, $\{\lambda_\nbidder\}_{n=1}^{\infty}$ is an increasing sequence with a limit of roughly $2.42$. Goldberg et al.~\cite{GoldbergHKS04} conjectured that the lower bounds given by $\{\lambda_n\}$ are tight. Indeed, for $\nbidder=2$ bidders, the second price auction gives a matching competitive ratio of $\lambda_2$; for $\nbidder=3$ bidders, \cite{HartlineM05} gave a sophisticated auction with a competitive ratio that matches the lower bound of $\lambda_3$. These are the only cases for which optimal competitive auctions were known.
Our result confirms the conjecture of \cite{GoldbergHKS04} and settles the long standing open problem of designing optimal digital goods auctions with respect to the benchmark $\F^{(2)}$.

For the benchmark $\maxV$, we calculate the expected value of $\maxV(\vals)$ when $\vals$ is drawn from the equal revenue distribution, and derive a sequence of optimal competitive ratios $\gamma_\nbidder$ for each number $\nbidder$ of bidders in the auction. We note that $\gamma_2=1$, $\gamma_3=5/4$, and $\{\gamma_\nbidder\}_{n=1}^{\infty}$ is an increasing sequence with the limit of $e-1$ as $\nbidder$ approaches infinity.



\medskip

Finally, as another application of the characterization theorem, we consider optimal competitive auctions for the multi-unit limited supply setting where there is an item with $k$ units of supply for sale to $\nbidder$ unit-demand bidders. It was observed in \cite{GoldbergHKSW06} that there is a competitive ratio preserving reduction from unlimited supply to limited supply. Namely, given a digital goods auction with a competitive ratio $\lambda$ with respect to $\F^{(2)}$ for $k$ bidders, one can construct a truthful auction for $\nbidder$ bidders with the same competitive ratio $\lambda$ with respect to $\F^{(2,k)}$ (where $\F^{(2,k)}$ is the optimal fixed price revenue provided that at least $2$ and at most $k$ bidders receive the item). Our analysis continues to hold for those benchmarks that only depend on the $k$ highest values, and thus, gives optimal competitive auctions in the competitive analysis framework.

%

\subsection{Related Work}

The study of competitive digital goods auctions was coined by Goldberg et al.~\cite{GoldbergHW01}, where the authors introduced the random sampling optimal price auction and showed that it has a constant competitive ratio with respect to $\F^{(2)}$. Later on the competitive ratio of the auction was shown to be 15 and 4.68, by Feige et al.~\cite{FeigeFHK05} and Alaei et al.~\cite{AlaeiMS09}, respectively. Since the pioneer work of \cite{GoldbergHW01}, a sequence of work has been devoted to design auctions with improved competitive ratios: the random sampling cost sharing auction \cite{FiatGHK02} with ratio 4; the consensus revenue estimate auction \cite{GoldbergH03} with ratio 3.39; the aggregation auction \cite{HartlineM05} with ratio 3.25; the best known ratio 3.12 is attained by the averaging auction \cite{IchibaI10}.


Fiat et al.~\cite{FiatGHK02} formulated the prior-free analysis framework for digital goods auctions. The framework was further developed to general symmetric auction problems and connected with the Bayesian framework in \cite{HartlineR08}. The relation between envy-freedom and prior-free mechanism design was further investigated in \cite{HartlineY11}.
Aggarwal et al. \cite{AggarwalFGHIS05} showed that every randomized auction in the digital goods environment can be derandomized in polynomial time with an extra additive error that depends on the maximal range of values.

Leonardi and Roughgarden~\cite{LR2012} introduced another benchmark, namely monotone-price benchmark $\M^{(2)}$. Later it was shown in~\cite{BKK2013} that digital goods auctions have a constant competitive ratio with respect to $\M^{(2)}$.

A number of variants of digital goods auctions have been investigated, including, e.g., online auctions \cite{BaryossefHW02,BlumH05}, limited supply ($k$-unit auctions) \cite{DevanurH09}, online auctions with unknown limited supply \cite{MahdianS06}, externalities between bidders \cite{GravinL13}, and matroid permutations and position environments \cite{HartlineY11}.



\section{Preliminaries}\label{sec:preliminary}

In a digital goods auction, an auctioneer sells multiple copies of an item in unlimited supply to $\nbidder$ bidders. Each bidder $i$ is interested in a single unit of the item and values it at a privately known value $\vali$. We consider a single-round auction, where each bidder submits a sealed bid $\bidi$ to the auctioneer. Upon receiving submitted bids $\bids=(\bid_1,\ldots,\bid_\nbidder)$ from all bidders, the auctioneer decides on whether each bidder $i$ receives an item and the amount that $i$ pays. If bidder $i$ wins an item, his {\em utility} is the difference between his value $\vali$ and his payment; otherwise, the bidder pays $0$ and his utility is $0$. The auctioneer's {\em revenue} is the total payment of the bidders.

We assume that all bidders are self-motivated and aim to maximize their own utility. We say that an auction is {\em truthful} or incentive compatible if it is a dominant strategy for every bidder $i$ to submit his private value, i.e., $\bidi=\vali$, no matter how other bidders behave. A randomized auction is (universally) truthful if it is randomly distributed over deterministic truthful auctions.

An auction is called {\em bid-independent} if, for each bidder $i$, the auctioneer computes a threshold price $\pricei$ according to the bids of the rest $\nbidder-1$ bidders $\bidsmi=(\bid_1,\ldots,\bid_{i-1},?,\bid_{i+1},\ldots,\bid_\nbidder)$. In other words, there is a function $g_i$ such that $\pricei=g_i(\bidsmi)$.
It was shown in~\cite{GoldbergHW01} that an auction is truthful if and only if it is bid-independent. Thus, it is sufficient to consider bid-independent auctions in order to design truthful auctions.

To evaluate the performance of an auction, we need to have a reasonable benchmark function $f: \R^\nbidder\rightarrow\R$,
where $f(\bids)$ measures our target revenue for the bid vector $\bids$.
Given a benchmark function $f(\cdot)$, we say that an auction $\auction$ has a {\em competitive ratio} of $\lambda$ with respect to $f(\cdot)$ if
$$\frac{f(\vals)}{\Ex{\auction(\vals)}}\le \lambda, \quad \forall \vals=(\vali[1],\ldots,\vali[\nbidder])$$
where $\Ex{\auction(\bids)}$ is the expected revenue of auction $\auction$ on the bid vector $\bids$.
The focus of our paper is to design truthful auctions that minimize the competitive ratio with respect to different benchmarks.

In this paper, we assume that a benchmark function $f(\cdot)$ is non-negative and monotone.
These are natural conditions for a function to serve as a reasonable benchmark.
Specifically, we will focus on the following benchmark functions. (Given a bid vector $\bids=(\bid_1,\ldots,\bid_n)$ we reorder bids so that $\bidi[(1)]\ge \bidi[(2)]\ge \cdots \ge \bidi[(\nbidder)]$.)
\begin{itemize}
\item $\F^{(2)}(\bids)= \max_{2\le k\le\nbidder} \ k\cdot \bidi[(k)]$. That is, $\F^{(2)}$ gives the largest possible revenue obtained in a fixed price auction given that there are at least two winners. $\F^{(2)}$ was denoted sometimes as $\mathcal{F}^{(2)}$ in the previous literature and provides the optimal envy-free revenue conditioned on that at least two bidders receive the item.
\item $\maxV(\bids)= \max_{1\le k<\nbidder} \ k\cdot \bidi[(k+1)]$. We note that $k\cdot \bidi[(k+1)]$ is the revenue of the $k$-item Vickrey auction  with a fixed supply of $k$ items. Hence, $\maxV$ gives the largest revenue obtained in the Vickrey auction for selling $k$ items for all possible values of the limited supply $k$.
\end{itemize}

\begin{theorem}[Goldberg et al.~\cite{GoldbergHKS04}]
\label{th:GoldbergHKS_ratio}
The competitive ratio with respect to $\F^{(2)}$ of any truthful randomized auction is at least
$$\lambda_\nbidder = 1-\sum_{i=2}^\nbidder \left(\frac{-1}{\nbidder}\right)^{i-1} \frac{i}{i-1} {\nbidder-1 \choose i-1}.$$
\end{theorem}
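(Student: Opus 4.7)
The plan is to apply Yao's minimax principle. Since every randomized truthful auction is a mixture of deterministic bid-independent ones, it suffices to exhibit a single distribution $\dist$ over valuation profiles for which the inequality $\Ex[\vals\sim\dist]{\F^{(2)}(\vals)}\ge\lambda_\nbidder\cdot \Ex[\vals\sim\dist]{\auction(\vals)}$ holds against every deterministic bid-independent auction $\auction$; this implies that some particular $\vals$ witnesses $\F^{(2)}(\vals)/\Ex{\auction(\vals)}\ge\lambda_\nbidder$. Following the intuition pushed in the introduction, the natural candidate is the product of $\nbidder$ independent copies of the equal-revenue distribution on $[1,\infty)$ with density $\weight(\vali)=1/\vali^2$, whose hallmark is that $\pricei\cdot\Prx[\vali]{\vali\ge\pricei}=\min(\pricei,1)$ for every $\pricei\ge 0$.

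Bounding the expected revenue is immediate. In a bid-independent auction the threshold $\pricei=g_i(\valsmi)$ is independent of $\vali$ under $\dist$, so by Fubini
\[
\Ex[\vals\sim\dist]{\pricei\cdot\ind[\vali\ge\pricei]}=\Ex[\valsmi]{\pricei\cdot\Prx[\vali]{\vali\ge\pricei}}\le 1,
\]
and summing over $i$ gives $\Ex[\vals\sim\dist]{\auction(\vals)}\le \nbidder$. This already isolates the nontrivial part of the argument: proving $\Ex[\vals\sim\dist]{\F^{(2)}(\vals)}=\nbidder\cdot\lambda_\nbidder$, after which dividing the two bounds yields the claim.

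For the expected benchmark, the substitution $u_i=1/\vali$ turns $\vals\sim\dist$ into $\nbidder$ i.i.d.\ $U[0,1]$ samples and rewrites the benchmark as $\F^{(2)}(\vals)=\max_{2\le k\le\nbidder}k/u_{[k]}$, where $u_{[1]}\le\cdots\le u_{[\nbidder]}$ are the uniform order statistics. The tail-integral identity $\Ex[\dist]{\F^{(2)}}=\int_0^{\infty}\Prx{\F^{(2)}\ge t}\diff t$ then reduces the task to evaluating $\Prx{\bigcup_{k=2}^{\nbidder}\{u_{[k]}<k/t\}}$. The alternating sign and the coefficient $\binom{\nbidder-1}{i-1}$ appearing in the definition of $\lambda_\nbidder$ strongly suggest attacking this union by inclusion-exclusion over the index $k$, using ballot-theorem-type identities for joint probabilities of events of the form $\{u_{[k_j]}<k_j/t\}$. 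The main obstacle is precisely this combinatorial computation: one must track the overlapping order-statistic events and telescope the resulting alternating binomial sum into the closed form defining $\lambda_\nbidder$. Everything else---the reduction to bid-independent auctions, the choice of $\dist$, and the per-bidder revenue bound---is essentially automatic from the equal-revenue property.
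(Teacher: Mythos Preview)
Your proposal is correct and follows essentially the same approach as the paper: use the product equal-revenue distribution, bound the expected revenue of any bid-independent auction by $\nbidder$, and show $\Ex[\vals\sim\dist]{\F^{(2)}(\vals)}=\nbidder\lambda_\nbidder$, then compare the two expectations. The paper does not carry out the order-statistic/inclusion-exclusion computation you outline but simply cites \cite{GoldbergHKS04} for the value of $\Ex{\F^{(2)}(\vals)}$, so your sketch of that step is a faithful elaboration of what the cited argument actually does.
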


We note that $\lambda_2=2$, $\lambda_3=13/6$, and $\{\lambda_\nbidder\}_{n=1}^{\infty}$ is an increasing sequence with a limit of $2.42$ when $n$ approaches infinity.
In the proof of the above theorem, the authors of \cite{GoldbergHKS04} constructed a so-called equal revenue distribution where all values $\vali$ are drawn identically and independently with probability $\Prx{\vali > x} = \frac{1}{x}$ for any $x\ge 1$. A remarkable property of this distribution is that any truthful auction has the same expected revenue $\nbidder$. It was shown that the expected value of $\F^{(2)}(\vals)$ is $\nbidder\times\lambda_\nbidder$. Thus, the theorem follows since
\[
\max_{\vals}\frac{\F^{(2)}(\vals)}{\Ex{\auction(\vals)}}\ge
\frac{\Ex{\F^{(2)}(\vals)}}{\Ex{\auction(\vals)}}=
\frac{\nbidder\times\lambda_\nbidder}{\nbidder} = \lambda_\nbidder.
\]



\section{Characterization of Benchmarks}\label{sec:characterization}

We introduce the following central definition for a Benchmark function.

\begin{definition}[Attainability]
A benchmark function $f(\cdot)$ is {\em $\lambda$-attainable}
if there exists a truthful auction that has a competitive ratio of $\lambda$ with respect to $f(\cdot)$.
\end{definition}

We shall give a sufficient and necessary condition of attainability of a benchmark in this section with a set of inequalities which only involves the function $f$ and ratio $\lambda$ but not any auction.  After having the characterization, we analyze the attainability of two well-studied benchmarks in the next section.

%

For technical simplicity, we consider a discrete and bounded domain for all bids: for any bidder $i$,
we assume that $b_i\in \big\{(1+\delta)^t ~|~ t=0,1,2,\ldots, N\big\}$, where $\delta>0$ is any fixed small constant. Thus, all bids are between 1 and $(1+\delta)^N$, and are multiples of $1+\delta$. Let
$$\support = \big\{(1+\delta)^t ~|~ t=0,1,2,\ldots, N\big\}$$ denote the support of a single bidder's bids and $\support^n$ denote the support of bid vectors of all $n$ bidders.
We note that such a multiplicative discretization is not critical for our characterization. (Alternatively, we may consider an additive discretization with
bids being integer multiples of $\delta$.) We will discuss how to generalize our analysis to continuous and unbounded domains at the end of this section.


For the domain $\support^n$ we assume without loss of generality that for any $\bidsmi$, the price offered to bidder $i$ is also from $\support$, which is of the form $(1+\delta)^t$. For $\bidsmi\in \support^{n-1}$ and $\pricei\in \support$, let $z_i(\bidsmi,\pricei)$ denote the probability that the auctioneer offers price $\pricei$ to bidder $i$ when observing others' bids $\bidsmi$. There exists a truthful auction that is $\lambda$-competitive with respect to the benchmark $f(\cdot)$ if and only if the following linear system is feasible:
\begin{equation*}
\LS_1: \
\begin{cases}
   \lambda \cdot \sum\limits_{i}\sum\limits_{\pricei=1}^{\bidi} \pricei\cdot z_i(\bidsmi,\pricei) \geq  f(\bids), & \forall \bids \\[.2in]
   \sum\limits_{\pricei=1}^{(1+\delta)^N}z_i(\bidsmi,\pricei)  \le 1, & \forall i,\bidsmi  \\[.2in]
   z_i(\bidsmi,\pricei) \ge 0, & \forall i,\bidsmi, \pricei
\end{cases}
\end{equation*}
Note that the summation over $p_i$ in $\LS_1$ is taken in the domain $\support=\big\{(1+\delta)^t ~|~ t=0,1,2,\ldots, N\big\}$.

We remark that the correspondence between $\{z_i(\bidsmi,\pricei)\}_i$ and truthful auctions is not one-to-one but one-to-many.
For any auction, there is a corresponding probability profile $\{z_i(\bidsmi,\pricei)\}_i$ that satisfies $\LS_1$. On the other hand, different auctions may have the same probability profile $\{z_i(\bidsmi,\pricei)\}_i$ (thus, they have the same expected revenue). Note that for any given $\{z_i(\bidsmi,\pricei)\}_i$, we can construct at least one corresponding truthful auction, which independently offers the threshold price $\pricei$ to each bidder $i$ with probability $z_i(\bidsmi,\pricei)$.

We define $$x_i(\bids)=x_i(\bidsmi,b_i)=\sum_{\pricei=1}^{\bidi}\pricei\cdot z_i(\bidsmi,\pricei).$$
Intuitively, $x_i(\bids)$ gives the expected revenue obtained from bidder $i$ when the bid vector is $\bids$.
We further define
\begin{equation*}
w\big((1+\delta)^t\big)=
\begin{cases}
\frac{\delta}{(1+\delta)^{t+1}}, & \quad t=0,1,2,\ldots,N-1 \\
\frac{1}{(1+\delta)^N},& \quad  t=N
\end{cases}
\end{equation*}
We note that $w(\cdot)$ can be viewed as a equal revenue distribution over $\support$, which satisfies the following nice property:
\[\sum_{t=k}^N w\big((1+\delta)^t\big)=\frac{1}{(1+\delta)^k}.\]
Let $\weight(\bids)=\prod_{k=1}^{\nbidder}w(\bid_k)$ and $\weight(\bidsmi) =\prod_{k\neq i}w(\bid_k)$.

Given these definitions, the aforementioned linear system $\LS_1$ can be rewritten as follows.
\begin{eqnarray*}\label{LP:main}
\LS_2: \
\begin{cases}
 \lambda \cdot \sum\limits_{i}  x_i(\bidsmi,\bidi) \geq  f(\bids), & \forall\bids \\[.2in]
 \sum\limits_{t=0}^{N}w\big((1+\delta)^t\big)\cdot x_i\big(\bidsmi,(1+\delta)^t\big) \le 1, & \forall i,\bidsmi \\[.2in]
 x_i\big(\bidsmi,(1+\delta)^t\big) \le x_i\big(\bidsmi,(1+\delta)^{t+1}\big), & \forall i,\bidsmi,t=0,1,2,\ldots,N-1 \\[.2in]
 x_i(\bidsmi,\bidi) \ge 0,  & \forall\bids
\end{cases}
\end{eqnarray*}
The third constraint in $\LS_2$ requires monotonicity of $x_i(\cdot)$, which is a necessary condition for the equivalence of the two linear systems.
Indeed, $z_i(\bidsmi,\bidi)=\frac{1}{\bidi} \big(x_i(\bidsmi,\bidi)-x_i(\bidsmi,\frac{\bidi}{1+\delta})\big)$ must be non-negative, where we denote $x_i(\bidsmi,\frac{1}{1+\delta})=0$. To summarize, we have the following claim.

\begin{proposition}
A monotone function $f$ is $\lambda$-attainable if and only if the linear system $\LS_2$ has a feasible solution $\bx=\{x_i(\bids)\}_{i,\bids}$ (or equivalently, $\LS_1$ has a feasible solution $\bz=\{z_i(\bids)\}_{i,\bids}$).
\end{proposition}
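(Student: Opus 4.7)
The plan is to establish the chain of equivalences: $\lambda$-attainability $\Leftrightarrow$ feasibility of $\LS_1$ $\Leftrightarrow$ feasibility of $\LS_2$. Most of the ingredients have already been laid out in the preceding discussion; the proof just needs to tie them together.

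For attainability $\Leftrightarrow$ $\LS_1$: I would invoke the Goldberg--Hartline--Wright characterization (truthful $\Longleftrightarrow$ bid-independent). Given a truthful auction, read off $z_i(\bidsmi, \pricei)$ as the probability that the induced threshold for bidder $i$ equals $\pricei$ when others bid $\bidsmi$; the expected revenue collected from $i$ is then $\sum_{\pricei \le \bidi}\pricei \cdot z_i(\bidsmi,\pricei)$, so the competitive-ratio inequality becomes the first constraint of $\LS_1$, while probabilities summing to at most $1$ and non-negativity give the remaining two. Conversely, given a feasible $\bz$, define the auction that independently, for each bidder $i$, draws a threshold $\pricei$ from the sub-distribution $z_i(\bidsmi,\cdot)$ (with no sale with the residual probability) and sells to $i$ iff $\bidi \ge \pricei$; this is bid-independent, hence truthful, and realizes the competitive ratio $\lambda$.

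For $\LS_1$ $\Leftrightarrow$ $\LS_2$, I would show the substitution $x_i(\bidsmi,\bidi)=\sum_{\pricei\le\bidi}\pricei\cdot z_i(\bidsmi,\pricei)$ is a bijection between feasible solutions, with inverse $z_i(\bidsmi,\bidi)=\frac{1}{\bidi}\bigl(x_i(\bidsmi,\bidi)-x_i(\bidsmi,\bidi/(1+\delta))\bigr)$ (using $x_i(\bidsmi,1/(1+\delta))=0$). The first and fourth constraints of $\LS_2$ are then immediate, while non-negativity of $\bz$ translates verbatim into the monotonicity constraint of $\LS_2$ via the inverse formula (and vice versa by telescoping). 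The only real content is the equivalence of the two normalization constraints $\sum_{\pricei}z_i(\bidsmi,\pricei)\le 1$ and $\sum_{t=0}^N w\bigl((1+\delta)^t\bigr)\cdot x_i\bigl(\bidsmi,(1+\delta)^t\bigr)\le 1$. Here I would swap the order of summation and use the defining property of the equal-revenue weights, namely $\sum_{t=k}^{N}w\bigl((1+\delta)^t\bigr)=1/(1+\delta)^k$, to obtain
\[
\sum_{t=0}^{N} w\bigl((1+\delta)^t\bigr)\,x_i\bigl(\bidsmi,(1+\delta)^t\bigr)
=\sum_{\pricei\in\support}\pricei\cdot z_i(\bidsmi,\pricei)\cdot\frac{1}{\pricei}
=\sum_{\pricei\in\support}z_i(\bidsmi,\pricei),
\]
which makes the two constraints identical.

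The only step that requires any calculation is this last interchange, and it is really just a geometric-series identity; there is no genuine obstacle because the weights $w$ were defined precisely so that this identity holds (this is why the equal-revenue distribution appears in the characterization of Theorem 1). I would conclude the proof by remarking that the bijection between $\bx$ and $\bz$ makes the three conditions not merely equivalent in feasibility but in fact in value, so the attained $\lambda$ is the same across all three formulations.
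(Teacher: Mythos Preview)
Your proposal is correct and follows essentially the same approach as the paper: the proposition is treated as a summary of the preceding discussion, using the bid-independent characterization for the first equivalence and the substitution $x_i(\bidsmi,\bidi)=\sum_{\pricei\le\bidi}\pricei\cdot z_i(\bidsmi,\pricei)$ for the second. Your write-up is in fact more explicit than the paper's, which merely asserts that $\LS_1$ ``can be rewritten'' as $\LS_2$ without spelling out the swap-of-summation computation for the normalization constraint that you carry out.
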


For any set $S\subseteq \support^n$, let $S\proji$ denote the projections of $S$ along the $i$-th coordinate.
Formally,
$$S\proji = \big\{\bidsmi ~|~ \exists b_i \mbox{ s.t. } (\bidsmi, b_i)\in S\big\}.$$
For a non-negative and monotone function $f$, we have the following characterization.

\begin{theorem}\label{th:characterization}
Over a domain $\support$, a non-negative and monotone function $f$ is $\lambda$-attainable if and only if for any upward closed set $S\subset\support^n$,
\begin{equation}\label{eq:rev_condition_old}
\sum_{\bids\in S} \weight(\bids)\cdot f(\bids) \le \lambda\cdot \sum_{i}\sum_{\bidsmi\in S\proji}\weight(\bidsmi).
\end{equation}
\end{theorem}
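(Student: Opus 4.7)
The plan is to use LP duality applied to the feasibility system $\LS_1$ (equivalently $\LS_2$), which by the preceding proposition characterizes $\lambda$-attainability of $f$. The \emph{necessity} direction is a direct calculation. Given a feasible $\bx$ for $\LS_2$ and an upward-closed $S\subseteq\support^n$, I multiply the revenue constraint $\lambda\sum_i x_i(\bids)\ge f(\bids)$ by $\weight(\bids)$, sum over $\bids\in S$, and swap the summation order to get
$$\sum_{\bids\in S}\weight(\bids)f(\bids)\;\le\;\lambda\sum_i\sum_{\bidsmi}\weight(\bidsmi)\sum_{b_i:\,(\bidsmi,b_i)\in S}w(b_i)\,x_i(\bidsmi,b_i).$$
Upward-closedness of $S$ makes the inner set of $b_i$'s an upper interval in $\support$, non-empty iff $\bidsmi\in S\proji$; then the budget constraint $\sum_{b_i}w(b_i)x_i(\bidsmi,b_i)\le 1$ bounds the inner sum by $1$ in that case and by $0$ otherwise, yielding \eqref{eq:rev_condition_old}.

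For \emph{sufficiency} I would argue the contrapositive via Farkas' lemma applied to $\LS_1$. Infeasibility yields multipliers $\alpha_\bids,\beta_{i,\bidsmi}\ge 0$ satisfying $\sum_\bids\alpha_\bids f(\bids)>\sum_{i,\bidsmi}\beta_{i,\bidsmi}$ together with $\beta_{i,\bidsmi}\ge\lambda\,p\sum_{b_i\ge p}\alpha_{(\bidsmi,b_i)}$ for every $p\in\support$; minimizing $\beta$ collapses infeasibility to the existence of $\alpha\ge 0$ with
$$\sum_\bids\alpha_\bids f(\bids)\;>\;\lambda\sum_i\sum_{\bidsmi}\max_{p\in\support}\,p\sum_{b_i\ge p}\alpha_{(\bidsmi,b_i)}. \qquad (\star)$$
For the special family $\alpha=\weight\cdot\mathbf{1}_S$ with $S$ upward-closed, the equal-revenue identity $\sum_{b\ge(1+\delta)^s}w(b)=(1+\delta)^{-s}$ forces the inner $\max_p$ to equal $\weight(\bidsmi)$ when $\bidsmi\in S\proji$ (attained at any $p$ at or above the $S$-threshold on $b_i$) and to equal $0$ otherwise, so $(\star)$ becomes precisely the negation of \eqref{eq:rev_condition_old}. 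Thus, if \eqref{eq:rev_condition_old} holds for every upward-closed $S$, no certificate of this special form exists.

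The main technical work, and where I expect the principal obstacle, is showing that if \emph{any} $\alpha\ge 0$ witnesses $(\star)$, then so does one of the form $\weight\cdot\mathbf{1}_S$. My plan is a slice-wise ``monotonization'': for each $(i,\bidsmi)$, let $p_i^\star(\bidsmi)$ maximize $p\sum_{b_i\ge p}\alpha_{(\bidsmi,b_i)}$, and replace the one-dimensional slice $\alpha_{(\bidsmi,\cdot)}$ by the equal-revenue tail density $p_i^\star\,F_i(\bidsmi,p_i^\star)\cdot w(b_i)\,\mathbf{1}_{b_i\ge p_i^\star}$. By the identity $p\sum_{b\ge p}w(b)=1$ for $p\ge p_i^\star$, this preserves the $i$-th contribution to the right-hand side of $(\star)$ exactly, while monotonicity of $f$ in the $i$-th coordinate ensures the left-hand side can only increase. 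Iterating over $i$ and then applying a layer-cake decomposition over the thresholds $p_i^\star$ should rewrite the monotonized $\alpha$ as a non-negative combination of indicators $\weight\cdot\mathbf{1}_{S_k}$ of upward-closed $S_k$'s, from which an averaging argument extracts one $S_k$ that still violates \eqref{eq:rev_condition_old}. The delicate point is that when slice $i$ is replaced, the marginals along which the maximizers $p_j^\star$ for other coordinates $j$ are computed also shift, so the iteration must either be executed simultaneously across all coordinates or be accompanied by careful bookkeeping that guarantees the final support is genuinely upward-closed.
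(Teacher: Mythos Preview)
Your necessity argument is correct and essentially matches the paper's. For sufficiency you take a genuinely different route: the paper constructs a primal solution directly, continuously increasing the $x_i(\bids)$ from zero while maintaining a chain of tight upward-closed sets (with Claim~\ref{cl:tight_inequalities} supplying the needed lattice structure), whereas you argue the contrapositive via Farkas and try to massage an arbitrary dual certificate $\alpha$ into one of the special form $\weight\cdot\mathbf 1_S$.

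There is a concrete gap in your monotonization step. The claim that replacing the slice $\alpha_{(\bidsmi,\cdot)}$ by $p_i^\star\,F_i(\bidsmi,p_i^\star)\cdot w(\cdot)\,\mathbf 1_{[\,\cdot\,\ge p_i^\star]}$ can only \emph{increase} the left-hand side of $(\star)$ is false. Writing $F_i(t)$ for $F_i(\bidsmi,t)$, the new slice has tail $p_i^\star F_i(p_i^\star)/t$ at any threshold $t\ge p_i^\star$, which indeed dominates the old tail $F_i(t)$ by optimality of $p_i^\star$; but for $t<p_i^\star$ the new tail is only $F_i(p_i^\star)\le F_i(t)$. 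Thus the replacement does not stochastically dominate the original slice, and monotonicity of $f$ in $b_i$ buys you nothing. Concretely, if $f(\bidsmi,\cdot)$ is constant along the slice---which is monotone, and actually occurs for $\F^{(2)}$ whenever coordinate $i$ is not among the two largest bids---the left-hand side drops by exactly that constant times the $\alpha$-mass below $p_i^\star$. Since your replacement holds the $i$-th right-hand-side term fixed, the strict inequality $(\star)$ can be destroyed at this very first step, before you ever confront the cross-coordinate issue you (correctly) flag at the end. A Farkas-style proof may still be possible, but it needs a reduction that controls both sides of $(\star)$ simultaneously; the paper's primal construction sidesteps this difficulty entirely.
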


\begin{proof}
{\em Only if (necessity).} If $f(\cdot)$ is $\lambda$-attainable, then there exists a solution $x_i(\bidsmi,\bidi)$ which  satisfies all constraints in $\LS_2$.
Thus,
\begin{align*}
 \sum_{\bids\in S}\weight(\bids)\cdot f(\bids) & \leq \lambda\cdot \sum_{\bids\in S}\weight(\bids)\sum_i x_i(\bidsmi,\bidi)\\
&=\lambda\cdot \sum_i\sum_{\bidsmi\in S\proji}\weight(\bidsmi)\sum_{\bidi:\bids\in S}w(\bidi)\cdot x_i(\bidsmi,\bidi)\\
&\le \lambda\cdot \sum_i\sum_{\bidsmi\in S\proji}\weight(\bidsmi)\sum_{\bidi \in \support}w(\bidi)\cdot x_i(\bidsmi,\bidi)\\
&\le \lambda\cdot \sum_i\sum_{\bidsmi\in S\proji}\weight(\bidsmi).
\end{align*}

{\em If (sufficiency).} Our goal is to find a feasible solution $\bx=\{x_i(\bids)\}_{i,\bids}$ to the above linear system $\LS_2$, given the system of inequalities \eqref{eq:rev_condition_old}. Our proof is constructive: we provide a procedure of continuously increasing $\{x_i(\bids)\}_{i,\bids}$, starting from 0, to a point where all constraints of $\LS_2$ are satisfied. In order to do that, we write a slightly more general linear system of the following form.
%
%
%
\begin{eqnarray*}
\LS_3: \
\begin{cases}
 \lambda \cdot \sum\limits_{i}  x_i(\bidsmi,\bidi) \geq  f(\bids), & \forall\bids \\[.2in]
 \sum\limits_{t=0}^{N}w\big((1+\delta)^t\big)\cdot x_i\big(\bidsmi,(1+\delta)^t\big) \le g_i(\bidsmi), & \forall i,\bidsmi \\[.2in]
 x_i\big(\bidsmi,(1+\delta)^t\big) \le x_i\big(\bidsmi,(1+\delta)^{t+1}\big), & \forall i,\bidsmi,t=0,1,2,\ldots,N-1 \\[.2in]
 x_i(\bidsmi,\bidi) \ge 0,  & \forall\bids
\end{cases}
\end{eqnarray*}
The only difference between $\LS_2$ and $\LS_3$ is $g_i(\bidsmi)$ in the second constraint.
Intuitively, $g_i(\bidsmi)$ represents a total mass along direction $i$ at the point $\bidsmi$.
Initially, $g_i(\bidsmi)=1$, for all $i$ and $\bidsmi$, and in which case, the two linear systems $\LS_2$ and $\LS_3$ are identical. We will however decrease the values of $g_i(\bidsmi)$'s in the process of the proof while maintaining the following condition, derived from \eqref{eq:rev_condition_old}, for all upward closed sets $S\subset\support^n$.
\begin{equation}\label{eq:rev_condition}
 \sum_{\bids\in S} \weight(\bids)\cdot f(\bids) \le \lambda \cdot\sum_{i}\sum_{\bidsmi\in S\proji} g_i(\bidsmi) \weight(\bidsmi).
\end{equation}

Let $\tight$ be the collection of upward closed sets for which the above inequality \eqref{eq:rev_condition} is tight. It turns out that $\tight$ has a nice structure summarized in the following claim.

\begin{claim}
\label{cl:tight_inequalities}
If $S_1, S_2\in\tight$, then $S_1\cup S_2, S_1\cap S_2\in\tight$.
\end{claim}

\begin{proof}
We note that both $S_1$ and $S_2$ are upward closed sets, then so are $S_1\cup S_2$ and $S_1\cap S_2$. This implies that
\begin{eqnarray}
\sum\limits_{\bids\in S_1\cup S_2} \weight(\bids)\cdot f(\bids) &\le& \lambda\cdot \sum\limits_{i}\sum\limits_{\bidsmi\in (S_1\cup S_2)\proji}g_i(\bidsmi)\cdot\weight(\bidsmi) \label{eq:cup_ineqaulity}\\
 \sum\limits_{\bids\in S_1\cap S_2} \weight(\bids)\cdot f(\bids) &\le& \lambda\cdot \sum\limits_{i}\sum\limits_{\bidsmi\in (S_1\cap S_2)\proji}g_i(\bidsmi)\cdot\weight(\bidsmi) \label{eq:cap_ineqaulity}
\end{eqnarray}

As both $S_1,S_2\in\tight$, we have
\begin{eqnarray*}
 \sum\limits_{\bids\in S_1} \weight(\bids)\cdot f(\bids) &=& \lambda\cdot \sum\limits_{i}\sum\limits_{\bidsmi\in S_1\proji }g_i(\bidsmi)\cdot\weight(\bidsmi)\\
 \sum\limits_{\bids\in S_2} \weight(\bids)\cdot f(\bids) &=& \lambda\cdot  \sum\limits_{i}\sum\limits_{\bidsmi\in S_2\proji }g_i(\bidsmi)\cdot\weight(\bidsmi)
\end{eqnarray*}
For each $\bidsmi$, if $\bidsmi$ is a projection of some $\bids\in S_1\cup S_2$, then either $\bidsmi\in S_1\proji$ or $\bidsmi\in S_2\proji$;
if $\bidsmi$ is a projection of some $\bids\in S_1\cap S_2$, then $\bidsmi\in S_1\proji$ and $\bidsmi\in S_2\proji$.
Hence,
\begin{eqnarray*}
& & \sum\limits_{\bids\in S_1\cup S_2} \weight(\bids)\cdot f(\bids)+ \sum\limits_{\bids\in S_1\cap S_2} \weight(\bids)\cdot f(\bids) \\[.1in]
&=& \sum\limits_{\bids\in S_1} \weight(\bids)\cdot f(\bids) + \sum\limits_{\bids\in S_2} \weight(\bids)\cdot f(\bids)\\
&=& \lambda\cdot \sum\limits_{i} \left(\sum\limits_{\bidsmi\in S_1\proji }g_i(\bidsmi)\cdot\weight(\bidsmi) + \sum\limits_{\bidsmi\in S_2\proji }g_i(\bidsmi)\cdot\weight(\bidsmi) \right) \\
&\ge& \lambda\cdot \sum\limits_{i}\left(\sum\limits_{\bidsmi\in (S_1\cup S_2)\proji}g_i(\bidsmi)\cdot\weight(\bidsmi)+
\lambda\cdot \sum\limits_{\bidsmi\in (S_1\cap S_2)\proji }g_i(\bidsmi)\cdot\weight(\bidsmi)\right)
\end{eqnarray*}

Therefore, both inequalities \eqref{eq:cup_ineqaulity} and \eqref{eq:cap_ineqaulity} are tight. Thus, $S_1\cap S_2, S_1\cup S_2\in\tight$.
\end{proof}

The high-level idea of our proof is to continuously increase $\{x_i(\bids)\}_{i,\bids}$, starting from 0, to a point where all constraints of $\LS_2$ are satisfied. To implement the idea, we identify two special sets $S_0$ and $S_1$, where $S_0\supsetneq S_1$, and a special set $T_i\subseteq S_0\proji \setminus S_1\proji$ for a specific coordinate $i$.
Then for each $\bidsmi \in T_i$, we find a threshold point $c_i(\bidsmi)$ and increase $x_i(\bidsmi,b)$ by $\varepsilon$ for all $b\ge c_i(\bidsmi)$. (In the proof, $c_i(\bidsmi)$ turns out to be the boundary of $S_0$.) 
However, in the process, instead of increasing $x_i$'s, we decrease the values of $f(\cdot)$ and $g_i(\cdot)$ to simplify our analysis (thus, the values of $x_i$'s do not change): For each $\bidsmi \in T_i$, we decrease $f(\bidsmi,b)$ by $\lambda\cdot \varepsilon$ for $b\ge c_i(\bidsmi)$ to have an equivalent effect on the first constraint of $\LS_3$; further, we subtract $\frac{\varepsilon}{c_i(\bidsmi)}$ from $g_i(\bidsmi)$ to balance the update of $x_i$'s in the second constraint of $\LS_3$.
The process continues until all $f(\cdot)$'s become 0, from which point we get an equivalent solution $\{x_i(\bids)\}_{i,\bids}$ to the original problem. We next describe the formal proof.


%
%

\medskip
Let $R = \{\bids ~|~ f(\bids) > 0\}$ be the support of $f(\cdot)$.
Since $f(\cdot)$ is a monotone function, we know that $R$ is an upward closed set. If $R=\emptyset$, then we are done; thus, we assume that $R\neq \emptyset$. For any $S\in \tight$, the following chain of inequalities
\begin{eqnarray*}
\sum_{\bids\in S} \weight(\bids)\cdot f(\bids) = \sum_{\bids\in S\cap R} \weight(\bids)\cdot f(\bids)
\le \lambda\cdot \sum_{i}\sum_{\bidsmi\in (S\cap R)\proji}\weight(\bidsmi)
\le \lambda\cdot \sum_{i}\sum_{\bidsmi\in S\proji}\weight(\bidsmi)
\end{eqnarray*}
is in fact an equality. Hence, the above two inequalities are tight and $S \cap R\in \tight$.
Hence, in the following, we will only consider those tight sets in $\tight$ that are contained in $R$.

Let $S_0=R$. We will maintain a chain of upward closed sets $S_1,\dots,S_m\in\tight$ with the following structure:
\[
R=S_0 \supsetneq S_1\supsetneq S_2\supsetneq\dots\supsetneq S_m=\emptyset.
\]
During our process, we preserve the following key properties ($\dag$):
\begin{center}
\fbox{\parbox{5.7in}{
\hspace{0.05in} \\[-0.05in] Invariant properties ($\dag$)
\begin{enumerate}
  \item Inequality \eqref{eq:rev_condition} holds for all upward closed sets.
  \item All sets $S_1,\ldots,S_m$ in the chain $R=S_0 \supsetneq S_1\supsetneq S_2\supsetneq\dots\supsetneq S_m=\emptyset$ remain tight.
  \item $f(\cdot)$ is nonnegative and monotonically increasing.
  \item $g_i(\bidsmi)$ is nonnegative and monotonically decreasing on $(S_{j-1})\proji \setminus S_j\proji$ for every $1\le i \le n$ and  $1\le j\le m$.
\end{enumerate}
}}
\end{center}

%
%
%

We note that all these properties hold at the beginning of the process (e.g., we may simply choose $m=1$ and $S_m=\emptyset$).
Since $R\supsetneq S_1$, we have
\[\sum_{\bids\in R}\weight(\bids)\cdot f(\bids)>\sum_{\bids\in S_1}\weight(\bids)\cdot f(\bids).\]
Since $R$ satisfies condition (\ref{eq:rev_condition}) and $S_1\in\tight$, we have
\begin{eqnarray*}
 \sum_{\bids\in R} \weight(\bids)\cdot f(\bids) &\leq \lambda\cdot \sum\limits_{i}\sum\limits_{\bidsmi\in R\proji}g_i(\bidsmi)\cdot\weight(\bidsmi)  \\
\sum\limits_{\bids\in S_1} \weight(\bids)\cdot f(\bids) &= \lambda\cdot \sum\limits_{i}\sum\limits_{\bidsmi\in S_1\proji}g_i(\bidsmi)\cdot\weight(\bidsmi).
\end{eqnarray*}
%
%
Hence,
\begin{equation}\label{eq:index}
\sum_{i}\sum_{\bidsmi\in R\proji}g_i(\bidsmi)\cdot\weight(\bidsmi)>\sum\limits_{i}\sum\limits_{\bidsmi\in S_1\proji}g_i(\bidsmi)\cdot\weight(\bidsmi).
\end{equation}
Thus, we can find $i$ and $\bidsmi$ such that $\bidsmi\in R\proji \setminus S_1\proji$ and $g_i(\bidsmi)>0.$
From now on to the end of the proof, for notational simplicity we use $i$ to denote this particular index rather than a generic one.
Let $T_i=\left\{\bidsmi\in R\proji \setminus S_1\proji \mid g_i(\bidsmi)>0\right\}$; note that $T_i\neq\emptyset.$

For each $\bidsmi\in T_i$ we consider the smallest $c_i(\bidsmi)\in\support$ such that $f(\bidsmi,c_i(\bidsmi))\neq 0.$
We note that $c_i(\bidsmi)$ is well defined for each $\bidsmi\in T_i$ since $T_i\subset R\proji$.
For the fixed index $i$, we intend to update $\bx$ as follows:
\[
x_i(\bidsmi,b)\leftarrow x_i(\bidsmi,b)+\varepsilon, \mbox{ for all } \bidsmi\in T_i \mbox{ and } b\ge c_i(\bidsmi),
\]
for some fixed $\varepsilon>0$. In our process, instead of increasing $\bx$ we decrease the values of $f(\cdot)$ and $g_i(\cdot)$ as follows:
\begin{equation*}
(\lozenge) \ \
\begin{cases}
   f(\bidsmi,b) \leftarrow f(\bidsmi,b)-\lambda\cdot \varepsilon, & \mbox{ for all } \bidsmi\in T_i \mbox{ and } b\ge c_i(\bidsmi), \\[.05in]
   g_i(\bidsmi) \leftarrow g_i(\bidsmi)- \frac{\varepsilon}{c_i(\bidsmi)}, & \mbox{ for all } \bidsmi\in T_i.
\end{cases}
\end{equation*}
The decrements are with respect to the fixed index $i$ only\footnote{Because the process updates values only for one dimension at one step, the auction generated by our approach may not be symmetric. See the example in the next section.} and are implemented by continuously increasing $\varepsilon$ from 0 until the value of one of $f(\bids)$ and $g_i(\bidsmi)$ drops down to $0$, or one more inequality \eqref{eq:rev_condition} becomes tight for a new upward closed set.

Before describing how to proceed with the process, we establish some observations for the above updates $(\lozenge)$.

%
%

\begin{claim}\label{cl:tight_sets}
For any $\varepsilon$ and upward closed set $S\subseteq S_1$, the two sides of the inequality \eqref{eq:rev_condition} for $S$ remain unchanged. In particular, the inequality is still tight for all sets $S_1,\ldots,S_m$ in the chain $R=S_0\supsetneq S_1\supsetneq S_2\supsetneq \dots\supsetneq S_m=\emptyset$.
\end{claim}

\begin{proof}
The claim follows trivially since for any $S\subseteq S_1$, none of $f(\bids)$ and $g_i(\bidsmi)$ changes for every $\bids\in S$ and $\bidsmi\in S\proji$.
\end{proof}

\begin{claim}\label{cl:tight_sets1}
For any upward closed set $S\supset R \setminus S_1$ and $\varepsilon$, the condition \eqref{eq:rev_condition} is still satisfied.
\end{claim}

\begin{proof}
For the considered set $S$, the changes of the left-hand side (LHS) and right-hand side (RHS) of \eqref{eq:rev_condition} are as follows.
\begin{eqnarray*}
\text{LHS} &\leftarrow& \text{LHS}-\varepsilon \lambda \cdot\sum_{\bidsmi\in T_i} \weight(\bidsmi)\cdot \sum_{\bidi=c_i(\bidsmi)}^{(1+\delta)^N} w(\bidi) = \text{LHS}-\varepsilon \lambda \cdot\sum_{\bidsmi\in T_i} \weight(\bidsmi)\cdot \frac{1}{c_i(\bidsmi)} \\
\text{RHS} &\leftarrow& \text{RHS}- \lambda \cdot \sum_{\bidsmi\in T_i} \weight(\bidsmi)\cdot \frac{\varepsilon}{c_i(\bidsmi)}
\end{eqnarray*}
Hence, two sides of~\eqref{eq:rev_condition} decrease by exactly the same amount. Therefore, inequality~\eqref{eq:rev_condition} remains true for $S$.
\end{proof}
%
%
%

\begin{claim}\label{cl:fmonotone}
The function $f(\cdot)$ remains monotonically increasing.
\end{claim}
\begin{proof}
Let us assume to the contrary that $f(\cdot)$ becomes non-monotone after one update. We note that all the four key properties $(\dag)$ hold before the update.  Then there must exist a pair of vectors $\wbids\prec\bids$ such that $0\le f(\bids)<f(\wbids)$ after the update.
We have the following observations, where all variables denote their values before the update.
\begin{itemize}
\item Every value of $f(\cdot)$ either remains the same or decreases by $\lambda \cdot \varepsilon$. Thus, $f(\bids)$ decreases and $f(\wbids)$ remains the same.
\item $\bidsmi\in T_i$, since $f(\bids)$ must decrease. Thus, $g_i(\bidsmi)>0$.
\item $\wbids\in R=S_0$, as $f(\wbids)>0$.
\item $\wbidsmi\notin T_i$ (otherwise, we would decrease $f(\wbids)$ by $\lambda\cdot \varepsilon$, since $f(\wbids)>0$).
\item $\wbidsmi\in S_{0}\proji \setminus S_{1}\proji$, since $\wbidsmi\prec\bidsmi\notin S_1\proji$.
\item $g_i(\wbidsmi)=0$, as $\wbidsmi\notin T_i$.
\item $g_i(\bidsmi)\le g_i(\wbidsmi)$, as $\wbidsmi\prec\bidsmi$ and $g_i(\cdot)$ is decreasing on $S_{0}\proji \setminus S_1\proji$.
\end{itemize}
Therefore, we obtain that $g_i(\bidsmi)\le g_i(\wbidsmi)=0$ and $\bidsmi\notin T_i$, a contradiction.
\end{proof}

\begin{claim}\label{cl:gmonotone}
The function $g_i(\cdot)$ remains monotonically decreasing on $(S_{j-1})\proji \setminus S_j\proji$
for each set $S_j$ in the chain $R=S_0 \supsetneq S_1\supsetneq S_2\supsetneq\dots\supsetneq S_m=\emptyset$.
\end{claim}
\begin{proof}
We note that $g_i(\cdot)$ does not change on $(S_{j-1})\proji \setminus S_j\proji$ for every $j>1$.
Thus, we only need to verify the claim for $S_{0}\proji \setminus S_1\proji$.
Assume to the contrary that there exists a pair of vectors $\wbidsmi,\hbidsmi\in S_{0}\proji \setminus S_1\proji$ such that $\wbidsmi\prec\hbidsmi$ and $0\le g_i(\wbidsmi)<g_i(\hbidsmi)$ after the update. We have the following observations.
\begin{itemize}
\item $g_i(\wbidsmi)$ must decrease.
\item $\wbidsmi\in T_i$.
\item There exists $\wbids=(\wbidsmi,\wbid)\in R$, since $\wbidsmi\in T_i$.
\item Let $\hbids=(\hbidsmi,\wbid)$, then $\hbids\in R$, as $\wbids\prec\hbids$ and $R$ is upward closed.
\item $\hbidsmi\in T_i$ since $g_i(\hbidsmi)>0$.
\end{itemize}
We note that for each $\bidsmi\in T_i$, the value of $g_i(\bidsmi)$ decreases by $\frac{\varepsilon}{c_i(\bidsmi)}$. Since $f(\cdot)$ is monotonically
increasing and $c_i(\bidsmi)$ is the smallest number such that $f(\bidsmi,c_i(\bidsmi))>0$, we have $c_i(\hbidsmi)\le c_i(\wbidsmi)$.
Therefore, the decrement of $g_i(\hbidsmi)$, which is $\frac{\varepsilon}{c_i(\hbidsmi)}$, is not smaller than the decrement of
$g_i(\wbidsmi)$, which is $\frac{\varepsilon}{c_i(\wbidsmi)}$. Since before the decrement we have $g_i(\wbidsmi)\ge g_i(\hbidsmi)$, we derive a contradiction.
\end{proof}

Combining Claim~\ref{cl:tight_sets},~\ref{cl:tight_sets1},~\ref{cl:fmonotone} and~\ref{cl:gmonotone} together, we conclude that after update $(\lozenge)$ the aforementioned four key properties $(\dag)$ still hold.
Thus, we continuously increase $\varepsilon$ until a threshold point when one of the three boundary conditions (i.e., $f(\cdot)=0$, $g_i(\cdot)=0$,
or inequality \eqref{eq:rev_condition} becomes tight for a new set) becomes tight, which forbids us from further increasing
$\varepsilon$. If more than one conditions become tight simultaneously, we consider them according to the following order.
\begin{itemize}
\item At a new point $f(\cdot)$ becomes $0$. We then redefine set $R=\{\bids ~|~ f(\bids) >0\}$. Note that by Claim~\ref{cl:fmonotone}, $R$ remains to be an upward closed set. Further, we let $S_i \leftarrow S_i \cap R$ for all sets in the chain $R=S_0 \supsetneq S_1\supsetneq S_2\supsetneq\dots\supsetneq S_m=\emptyset$. (If some of the sets become the same, we contract the chain into a shorter one. With respect to this refined chain, $g_i(\cdot)$ is still monotonically decreasing on each $(S_{j-1})\proji \setminus S_j\proji$.) Then we start over the whole process unless $R=\emptyset$, in which case the process terminates.

\item At a new point $g_i(\cdot)$ becomes $0$. We start over by finding a new index $i$ and set $T_i$ for inequality \eqref{eq:index}. (The set $R$ and the chain remain the same.)

\item There is a new upward closed set, say $S'\subset\support^n$, for which inequality \eqref{eq:rev_condition} becomes tight. Similarly, we assume without loss of generality that $S'\subset R$.
    First, $S'$ cannot contain $R\setminus S_{1}$ due to Claim \ref{cl:tight_sets1}. Second, $S_1$ cannot contain $S'$ due to Claim \ref{cl:tight_sets}.
    By Claim \ref{cl:tight_inequalities}, we know that $S^*\triangleq S'\cup S_{1}$ is also an upward closed tight set. We observe that $S^*=S'\cup S_{1}\supsetneq S_{1}$ and $R=S_{0}\supsetneq S^*$.
    Thus, we can plug $S^*$ into the sequence
    $$R=S_0\supsetneq S^* \supsetneq S_1\supsetneq S_2\supsetneq\dots\supsetneq S_m=\emptyset.$$
    Since $g_i(\cdot)$ is a monotonically decreasing function on set $S_{1}\proji \setminus S_0\proji$, it is a monotonically decreasing function on both $S_{1}\proji \setminus S^*\proji$ and $S^*\proji \setminus S_0\proji$. All the inequalities \eqref{eq:rev_condition} and conditions in $(\lozenge)$ continue to hold with respect to this new chain; hence, we can start over the whole process.
%
%
\end{itemize}

In each step of the update, we either increase the length of the tight set chain or decrease the support of $f(\cdot)$ or $g_i(\cdot)$. Thus,
the process will eventually terminate as there are only finitely many $\bids\in\support^n$ and finitely many upward closed sets.
We note that if there is $\bids\in\support^n$ with $f(\bids)\neq 0$, then not all $g_i(\cdot)$'s can become $0$, because otherwise \eqref{eq:rev_condition}
would be violated for $S=\support^n$. Therefore, the process must terminate when $f(\bids)=0$ for every $\bids\in\support^n$.

Therefore, the respective sequence of updates of $\{x_i(\bids)\}_{i,\bids}$ is a feasible solution to $\LS_2$.
Indeed, we have checked that the first and second constraints in $\LS_2$ are satisfied.
The third and forth constraints hold true, because in every update $(\lozenge)$ we may only increase $x_i(\bids)$, and for each fixed $\bidsmi$ we increase $x_i(\bidsmi,\bid)$ monotonically in $\bid$. Hence, the benchmark function $f(\cdot)$ is attainable.
\end{proof}

We further have the following claim, which will be used in the next section to prove that $\F^{(2)}$ is attainable.

\begin{claim}\label{cl:symmetry}
If $f(\cdot)$ is a symmetric monotone function (i.e., $f(\bids)$ is invariant under permutations of coordinates in $\bids$), then it is sufficient to verify inequality \eqref{eq:rev_condition_old} only for all symmetric upward closed sets for $f(\cdot)$ to be attainable.
\end{claim}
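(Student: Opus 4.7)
My plan is to reduce an arbitrary upward closed set to a family of symmetric upward closed sets via a ``layering'' construction and apply the assumed inequalities only to those symmetric layers. Given any upward closed $S\subseteq\support^n$, for each $k=1,\ldots,n!$ define
\[
S_k \;=\; \bigl\{\bids\in\support^n \,:\, |\{\pi : \pi\cdot\bids \in S\}| \ge k\bigr\},
\]
where $\pi$ ranges over permutations of coordinates. Each $S_k$ is automatically symmetric (the defining count is permutation-invariant) and upward closed (because $S$ is upward closed, if $\pi\cdot\bids\in S$ and $\bids'\succeq\bids$ then $\pi\cdot\bids'\succeq\pi\cdot\bids\in S$, so the count can only grow). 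Hence by hypothesis each $S_k$ satisfies inequality \eqref{eq:rev_condition_old}.

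The heart of the argument is to sum these $n!$ inequalities and compare to \eqref{eq:rev_condition_old} for $S$ itself. For the left-hand side I would compute
\[
\sum_{k=1}^{n!}\sum_{\bids\in S_k}\weight(\bids)f(\bids)
\;=\;\sum_{\bids}\weight(\bids)f(\bids)\,\bigl|\{\pi:\pi\cdot\bids\in S\}\bigr|
\;=\;\sum_{\pi}\sum_{\bids:\pi\cdot\bids\in S}\weight(\bids)f(\bids),
\]
and then use the symmetry of both $\weight$ and $f$ under permutations of coordinates to rewrite the inner sum as $\sum_{\bids'\in S}\weight(\bids')f(\bids')$, giving an exact equality $\sum_k \mathrm{LHS}(S_k)=n!\cdot\mathrm{LHS}(S)$.

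For the right-hand side I would perform an analogous count:
\[
\sum_{k=1}^{n!}\sum_i\sum_{\bidsmi\in (S_k)\proji}\weight(\bidsmi)
\;=\;\sum_i\sum_{\bidsmi}\weight(\bidsmi)\cdot\max_{b_i}\bigl|\{\pi:\pi\cdot(\bidsmi,b_i)\in S\}\bigr|,
\]
since the largest $k$ with $\bidsmi\in (S_k)\proji$ is exactly this maximum. The key step is then the obvious set-theoretic bound $\max_{b_i}|A_{b_i}|\le|\bigcup_{b_i}A_{b_i}|$ applied to $A_{b_i}=\{\pi:\pi\cdot(\bidsmi,b_i)\in S\}$, which yields
\[
\max_{b_i}\bigl|\{\pi:\pi\cdot(\bidsmi,b_i)\in S\}\bigr|
\;\le\;\bigl|\{\pi:\exists b_i,\;\pi\cdot(\bidsmi,b_i)\in S\}\bigr|
\;=\;\bigl|\{\pi:\bidsmi\in(\pi\cdot S)\proji\}\bigr|.
\]
Summing and swapping orders gives $\sum_k \mathrm{RHS}(S_k)\le\sum_\pi \mathrm{RHS}(\pi\cdot S)$, and because $\weight$ is symmetric, relabeling coordinates yields $\mathrm{RHS}(\pi\cdot S)=\mathrm{RHS}(S)$ for every $\pi$, hence $\sum_k\mathrm{RHS}(S_k)\le n!\cdot\mathrm{RHS}(S)$.

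Putting the two computations together,
\[
n!\cdot\mathrm{LHS}(S)\;=\;\sum_{k=1}^{n!}\mathrm{LHS}(S_k)\;\le\;\lambda\sum_{k=1}^{n!}\mathrm{RHS}(S_k)\;\le\;\lambda\cdot n!\cdot\mathrm{RHS}(S),
\]
and dividing by $n!$ yields \eqref{eq:rev_condition_old} for $S$. The only delicate point, which I would flag as the main obstacle, is verifying that the direction of the inequality in the RHS count is the useful one (i.e.\ that the max over $b_i$ is bounded \emph{above} by the union count, not below); everything else is bookkeeping using the symmetry of $\weight$ and $f$ together with the assumed inequality on the symmetric layers.
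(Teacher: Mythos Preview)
Your proof is correct and takes a genuinely different route from the paper's. The paper argues by contradiction: assuming some upward closed set violates \eqref{eq:rev_condition_old}, it continuously shrinks $f$ down to a symmetric monotone $f'$ for which all inequalities hold, locates a set $S$ that is \emph{tight} for $f'$ but was violated for $f$, permutes $S$ in all $n!$ ways, and then invokes the lattice property of tight sets (Claim~\ref{cl:tight_inequalities}) to conclude that the symmetric union $S_1\cup\cdots\cup S_{n!}$ is still tight for $f'$ and therefore violated for $f$. Your argument is direct: the level-set layering $S_k=\{\bids:|\{\pi:\pi\cdot\bids\in S\}|\ge k\}$ produces $n!$ symmetric upward closed sets whose LHS contributions sum \emph{exactly} to $n!\cdot\mathrm{LHS}(S)$ and whose RHS contributions sum to at most $n!\cdot\mathrm{RHS}(S)$, so summing the assumed inequalities for the $S_k$'s gives the inequality for $S$. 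Two small remarks: (i) because $S$ is upward closed, the sets $A_{b_i}$ are nested in $b_i$, so your bound $\max_{b_i}|A_{b_i}|\le|\bigcup_{b_i}A_{b_i}|$ is actually an equality, which means $\sum_k\mathrm{RHS}(S_k)=n!\cdot\mathrm{RHS}(S)$; (ii) the identity $|\{\pi:\exists b_i,\,\pi\cdot(\bidsmi,b_i)\in S\}|=|\{\pi:\bidsmi\in(\pi\cdot S)\proji\}|$ hides a $\pi\mapsto\pi^{-1}$ relabeling, which is harmless since you then sum over all $\pi$. Your approach is more elementary in that it avoids both the continuous-decrease argument and any appeal to Claim~\ref{cl:tight_inequalities}; the paper's approach, on the other hand, reuses machinery already built for the main theorem and makes explicit that the violating symmetric set can be taken to be a union of permuted copies of a single tight set.
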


\begin{proof}
Assume that there is an upward closed set that violates \eqref{eq:rev_condition_old}. Then we can continuously decrease $f(\cdot)$ in an arbitrary way so that it remains monotone and symmetric; the process continues until the moment when
\eqref{eq:rev_condition_old} holds for all upward closed sets. Let $f'$ denote the resulting final function; note that $f'$ is attainable. Let $S\neq\emptyset$ be an upward closed set that (i) is tight for $f'$ and (ii) violates inequality \eqref{eq:rev_condition_old} for $f$. We note that such a set $S$ must exist. We then consider upward closed sets $S_1,\dots,S_{n!}$ obtained from $S$ by permuting coordinates of every $\bids\in S$ (in a consistent way). Since $f$ and $f'$ are symmetric functions, $S_1,\dots,S_{n!}$ share properties (i) and (ii) with $S$.

Applying Claim \ref{cl:tight_inequalities} to function $f'$ and tight sets $S_1,\dots,S_{n!}$, we obtain that $S_1\cup\dots\cup S_{n!}$ is tight for $f'$. Furthermore, by the definition of $S$, there is $\bids\in S$ such that $f(\bids)>f'(\bids)$. We conclude that $S_1\cup\dots\cup S_{n!}$ violates \eqref{eq:rev_condition_old} for $f$. Hence, if $f(\cdot)$ is not attainable, then we can find a symmetric upward closed set that violates condition \eqref{eq:rev_condition_old}.
\end{proof}

\subsection{Unbounded and Continuous Domains}

Our above analysis works for the discrete and bounded domain where all bids are in the set
$\support = \big\{(1+\delta)^t ~|~ t=0,1,2,\ldots, N\big\}$. Our analysis continues to hold when $N=\infty$. Indeed, the argument for the necessity part of Theorem~\ref{th:characterization} for $N=\infty$ works in exactly the same way as the finite and bounded case. For the sufficiency part, we apply a standard argument from measure theory and mathematical analysis as follows.

For the unbounded domain, if inequality \eqref{eq:rev_condition_old} in the characterization theorem holds true for all upward closed sets,
then every finite version of $\LS_2$ must have a feasible solution $\bx_{_N}$ for every finite size $N$ of the support $\support(N)=\{1,1+\delta,\ldots,(1+\delta)^N\}$, where all outside densities are accumulated at the boundary of $\support(N)$. We consider a corresponding feasible solution $\bz_{_N}=\{z_i(\bids)\}_{i,\bids}$ to $\LS_1$ for each $\support(N)$. We treat the price distribution $z_i(\bidsmi)$ for each bidder $i$ as a measure defined on the $\sigma$-algebra generated by the sets $\{[1,1+\delta),\ldots, [(1+\delta)^N,\infty)\}^n$. Every solution $\bz_{_N}$ to $\LS_1$ is a vector of $n$ measures $z_i(\bids)$, of which we altogether further regard as a measure defined on the $\sigma$-algebra $\sigma_N$ generated by the sets  $\{[1,1+\delta),\ldots, [(1+\delta)^N,\infty)\}^{n\times n}$. In a few steps below we show how to construct a measure $\bz^*$ that generates a feasible solution of $\LS_1$ on every $\sigma$-algebra $\sigma_N$ for every $N$.

\begin{enumerate}
\item We notice that for any $N_2>N_1$, every solution $\bz_{N_2}$ to $\LS_1$ for $N_2$ is also a solution of $\LS_1$ for $N_1$. Indeed, as a solution of $\LS_1$ for $N_1$ we may just use the same measure $\bz_{N_2}$ on a smaller $\sigma$-algebra $\sigma_{N_1}$.
\item We observe that the set of all feasible solutions $\bz_N$ to $\LS_1$ for any fixed $N$ is a compact set (bounded and closed). We recall that one of the definitions of a compact set says that every infinite sequence $\bz_{_N}$ $(N=1,2,\ldots)$ of points in a compact $C$ must contain an infinite subsequence $\bz_{_{N(j)}}$ $(j=1,2,\ldots )$ that converges to a point $\bbarz_{_N}\in C$.
\item For every fixed $N=\ell$, we consider a sequence of solutions $\bz_k$ to $\LS_1$ for each $k=\ell,\ell+1,\dots$. For each $\bz_k$ we get a feasible solution $\bz_k|_\ell$ to $\LS_1$ for $N=\ell$. Further, since the set of the solutions to $\LS_1$ for $N=\ell$ is a compact set, we may choose an infinite subsequence that converges to $\bbarz_\ell$.
\item  We note that by our construction, the measure $\bbarz_\ell$ can be extended to a feasible solution of $\LS_1$ for every $k\ge\ell$, i.e., we can find a feasible solution $\bhatz_k$ to $\LS_1$ for $N=k$ such that as a measure $\bbarz_\ell=\bhatz_k|_\ell$. Indeed, we can take the infinite sequence $\{N(j)\}$ for which $\bz_{_{N(j)}}|_\ell\to\bbarz_\ell$ and consider another infinite sequence of measures $\bz_{_{N(j)}}|_k$. Within the latter sequence we can choose an infinite subsequence converging to a feasible solution $\bhatz_k$ of $\LS_1$ for $N=k$ (note that $\bhatz_k|_\ell=\bbarz_\ell$).
\item In fact, when constructing each $\bbarz_\ell$ we can ensure that $\bbarz_\ell|_j =\bbarz_j$ for every $j\le\ell.$
\item Finally, we may define our measure $\bz^*$ on the unbounded domain $\Lambda$ as a limit of $\bbarz_\ell$, where $\ell\to\infty$.
\end{enumerate}

In a similar way by taking $\delta\rightarrow 0$, we can extend Theorem~\ref{th:characterization} to the case of continuous support. Hence, in the continuous and unbounded domain, the sufficient and necessary condition \eqref{eq:rev_condition_old} in Theorem~\ref{th:characterization} translates into the following.

\begin{theorem}\label{th:characterization-unbounded}
A monotone function $f(\cdot)$ is $\lambda$-attainable if and only if for any measurable upward closed set $S\subset\R^{n}_{\ge 1}$
\begin{equation}\label{eq:condition}
\int\limits_{S}f(\bids)\cdot\weight(\bids)\diff\bids\le \lambda\cdot\sum_{i=1}^n \ \int\limits_{\ S\proji}\weight(\bidsmi)\diff\bidsmi,
\end{equation}
where
\begin{equation*}
 w(t)= \frac{1}{t^2},\quad\quad\weight(\bids)=\prod_{k=1}^{\nbidder}w(\bid_k),\quad\quad\weight(\bidsmi) = \prod\limits_{k\neq i}w(\bid_k).
\end{equation*}
\end{theorem}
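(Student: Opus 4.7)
The plan is to derive Theorem~\ref{th:characterization-unbounded} as a limit of the discrete bounded characterization of Theorem~\ref{th:characterization}, via two successive passages to the limit: first $N\to\infty$ (unbounding the multiplicative grid) and then $\delta\to 0$ (refining the grid to the continuum). The necessity direction is essentially unchanged from the discrete case: given a $\lambda$-competitive truthful auction, the same chain of inequalities that established necessity in Theorem~\ref{th:characterization}, with sums replaced by Lebesgue integrals against the equal-revenue density $w(t)=1/t^2$ and with Fubini used to peel off one coordinate at a time, yields \eqref{eq:condition} for every measurable upward closed $S$.

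For sufficiency, I fix $\delta>0$ and consider the truncated grid $\support(N)=\{(1+\delta)^t:0\le t\le N\}$. The continuous hypothesis \eqref{eq:condition} on $\mathbb{R}^n_{\ge 1}$ implies the discrete inequality \eqref{eq:rev_condition_old} on $\support(N)^n$, with tail mass above $(1+\delta)^N$ accumulated at the boundary cell and $f$ coarsened to the grid from below so as to preserve monotonicity. Theorem~\ref{th:characterization} then supplies a feasible $\bz_N$ for $\LS_1$ at each $N$. Regarding $\bz_N$ as a measure on the product $\sigma$-algebra $\sigma_N$, the set of feasible measures is compact (total mass is uniformly bounded, since $\int w=1$ per coordinate), so a diagonalization argument over $\ell=1,2,\dots$ selects nested weak-$*$ limits $\bbarz_\ell$ satisfying the consistency condition $\bbarz_\ell|_j=\bbarz_j$ for all $j\le\ell$. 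These nested measures glue into a single feasible measure $\bz^*_\delta$ on the unbounded multiplicative grid.

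Taking $\delta\to 0$ next, the family $\{\bz^*_\delta\}$ on $\mathbb{R}^n_{\ge 1}$ is tight and uniformly mass-bounded on every compact slab, so another compactness-plus-diagonalization argument produces a weak-$*$ limit $\bz^*$. From $\bz^*$ one recovers a randomized truthful bid-independent auction by sampling the offer price to bidder $i$ from the marginal of $z_i^*(\bidsmi,\cdot)$. The main obstacle I anticipate is verifying that the revenue inequality $\lambda\sum_i\int_{\pricei\le b_i}\pricei\,\diff z_i^*(\bidsmi,\pricei)\ge f(\vals)$ and the per-coordinate mass bound $\int\diff z_i^*(\bidsmi,\cdot)\le 1$ pass cleanly through the two weak-$*$ limits for every $\vals$, despite $f$ and $\pricei$ being potentially unbounded. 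I would handle this by double-truncating: cut $f$ at level $M$ and $\pricei$ at $(1+\delta)^N$, apply lower semicontinuity of integrals of non-negative continuous functionals under weak-$*$ convergence to the truncated quantities, and recover the original inequalities by monotone convergence as $M,N\to\infty$. The unit-mass bound survives because weak-$*$ limits are upper semicontinuous with respect to total mass on each compact slab.
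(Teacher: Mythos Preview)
Your proposal is correct and follows essentially the same route as the paper: the necessity direction is the same integral version of the discrete chain of inequalities, and for sufficiency you take the two limits $N\to\infty$ then $\delta\to 0$, using compactness of the feasible set of $\bz$-measures on each $\sigma$-algebra $\sigma_N$ and a diagonal extraction to produce nested limits $\bbarz_\ell$ with $\bbarz_\ell|_j=\bbarz_j$, then gluing. The paper's own argument in Section~3.1 is in fact sketchier than yours on two points you handle explicitly---(i) that the continuous hypothesis~\eqref{eq:condition} implies the discrete inequality~\eqref{eq:rev_condition_old} on each $\support(N)^n$ once $f$ is coarsened downward and tail mass is lumped at the top, and (ii) that the revenue and mass constraints of $\LS_1$ survive the weak-$*$ limits---so your truncation-plus-monotone-convergence plan for (ii) is a welcome addition rather than a deviation.
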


\section{A Simple Example}


We illustrate in this section how our process in the proof of Theorem~\ref{th:characterization} works on a simple example.
We take $n=2$, support $\support=\{1,2\}$, and consider the following benchmark:
\[ f(b_1,b_2)=\begin{cases} 1.5 & \min(b_1, b_2)<2 \\ 3.5  & \mbox{otherwise} \end{cases} \]
Note that for this specific benchmark, the optimal competitive ratio $\lambda$ is $1$ for which the condition \eqref{eq:rev_condition_old} holds for all upward closed sets.

To simplify the presentation, we use the following tables to denote the values of $x_1(\bid_1,\bid_2)$, $x_2(\bid_1,\bid_2)$, $f(\bid_1,\bid_2)$, $g_1(\bidsmi[1])$ and $g_2(\bidsmi[2])$, respectively.
\begin{center}
\begin{tabular}{|c|c|}
  \hline
    $x_1(1,2)$ & $x_1(2,2)$ \\
    \hline
    $x_1(1,1)$ & $x_1(2,1)$ \\
  \hline
\end{tabular}\quad
\begin{tabular}{|c|c|}
  \hline
    $x_2(1,2)$ & $x_2(2,2)$ \\
    \hline
    $x_2(1,1)$ & $x_2(2,1)$ \\
  \hline
\end{tabular}\quad
\begin{tabular}{|c|c|}
  \hline
    $f(1,2)$ & $f(2,2)$ \\
    \hline
    $f(1,1)$ & $f(2,1)$ \\
  \hline
\end{tabular}\quad
\begin{tabular}{|l|c|}
    \cline{1-1} \mbox{ $g_1(2)$} \\ \hline \slashbox {\mbox{ $g_1(1)$}}{$g_2(1)$} & \raisebox{-1.4ex}[0pt]{$g_2(2)$} \\ \hline
\end{tabular}
\end{center}
We write $\LS_3$ for our specific support $\support$ as follows.
\begin{eqnarray*}
\LS_3: \
\begin{cases}
 x_1(\bids)+x_2(\bids) \geq  f(\bids), & \forall\bids \\[.2in]
 w(1)\cdot x_i(\bidsmi,1)+w(2)\cdot x_i(\bidsmi,2)\le g_i(\bidsmi), & \forall i,\bidsmi \\[.2in]
 x_i(\bidsmi,1) \le x_i(\bidsmi,2), & \forall i,\bidsmi \\[.2in]
 x_i(\bids) \ge 0.  & \forall i,\bids,
\end{cases}
\end{eqnarray*}
where $w(2)=\frac{1}{2}$ and $w(1)=1-w(2)=\frac{1}{2}$.

The initial values of the tables are as follows (we record $x_i$'s so that we can reconstruct the mechanism).
\begin{center}
\begin{tabular}{|c@{\hspace{6pt}}c@{\hspace{0pt}}}
 \hline
  $x_1$
  &
  \begin{tabular}{||c|c|}
    $0$ & $0$\\
    \hline
    $0$ & $0$
  \end{tabular}\\
 \hline
\end{tabular}\quad\quad
\begin{tabular}{|c@{\hspace{6pt}}c@{\hspace{0pt}}}
 \hline
  $x_2$
  &
  \begin{tabular}{||c|c|}
    $0$ & $0$\\
    \hline
    $0$ & $0$
  \end{tabular}\\
 \hline
\end{tabular}\quad\quad
\begin{tabular}{|c@{\hspace{6pt}}c@{\hspace{0pt}}}
 \hline
  $f$
  &
  \begin{tabular}{||c|c|}
    $1.5$ & $3.5$\\
    \hline
    $1.5$ & $1.5$
  \end{tabular}\\
 \hline
\end{tabular}\quad\quad
\begin{tabular}{|l|c|}
    \cline{1-1} \mbox{ $1$} \\ \hline \slashbox {\mbox{ $1$}}{$1$} & \raisebox{-1.4ex}[0pt]{$1$} \\ \hline
\end{tabular}
\end{center}

We begin our process with the trivial chain $\{1,2\}^{\otimes 2}=R=S_0\supsetneq S_1=\emptyset$. We choose $i=1$, then according to our construction, $T_i=\{1,2\}$ and $c_i(\bidsmi)=1$ for each $\bidsmi\in T_i$. We continuously increase $\varepsilon$ up to $0.5$ until two new sets $\{(2,1),(2,2)\}$ and $\{(2,2)\}$ become tight simultaneously. After the update, we get the following tables.
\begin{center}
\begin{tabular}{|c@{\hspace{6pt}}c@{\hspace{0pt}}}
 \hline
  $x_1$
  &
  \begin{tabular}{||c|c|}
    $0.5$ & $0.5$\\
    \hline
    $0.5$ & $0.5$
  \end{tabular}\\
 \hline
\end{tabular}\quad\quad
\begin{tabular}{|c@{\hspace{6pt}}c@{\hspace{0pt}}}
 \hline
  $x_2$
  &
  \begin{tabular}{||c|c|}
    $0$ & $0$\\
    \hline
    $0$ & $0$
  \end{tabular}\\
 \hline
\end{tabular}\quad\quad
\begin{tabular}{|c@{\hspace{6pt}}c@{\hspace{0pt}}}
 \hline
  $f$
  &
  \begin{tabular}{||c|c|}
    $1$ & $3$\\
    \hline
    $1$ & $1$
  \end{tabular}\\
 \hline
\end{tabular}\quad\quad
\begin{tabular}{|l|c|}
    \cline{1-1} \mbox{ $0.5$} \\ \hline \slashbox {\mbox{ $0.5$}}{$1$} & \raisebox{-1.4ex}[0pt]{$1$} \\ \hline
\end{tabular}
\end{center}

We add these two new sets to our chain so that now it looks as $\{1,2\}^{\otimes 2}=R=S_0\supsetneq S_1\supsetneq S_2\supsetneq S_3=\emptyset$, where $S_1=\{(2,1),(2,2)\}$ and $S_2=\{(2,2)\}$. Now since $R\proji[1] \setminus S_1\proji[1]$ is an empty set, we must choose $i=2$; thus, $T_i=\{1\}$ and $c_i(\bidsmi)=1.$
We continuously increase $\varepsilon$ up to $1$ until all $f(1,1),f(1,2),g_2(1)$ simultaneously become zero. We get the following tables after the update.


\begin{center}
\begin{tabular}{|c@{\hspace{6pt}}c@{\hspace{0pt}}}
 \hline
  $x_1$
  &
  \begin{tabular}{||c|c|}
    $0.5$ & $0.5$\\
    \hline
    $0.5$ & $0.5$
  \end{tabular}\\
 \hline
\end{tabular}\quad\quad
\begin{tabular}{|c@{\hspace{6pt}}c@{\hspace{0pt}}}
 \hline
  $x_2$
  &
  \begin{tabular}{||c|c|}
    $1$ & $0$\\
    \hline
    $1$ & $0$
  \end{tabular}\\
 \hline
\end{tabular}\quad\quad
\begin{tabular}{|c@{\hspace{6pt}}c@{\hspace{0pt}}}
 \hline
  $f$
  &
  \begin{tabular}{||c|c|}
    $0$ & $3$\\
    \hline
    $0$ & $1$
  \end{tabular}\\
 \hline
\end{tabular}\quad\quad
\begin{tabular}{|l|c|}
    \cline{1-1} \mbox{ $0.5$} \\ \hline \slashbox {\mbox{ $0.5$}}{$0$} & \raisebox{-1.4ex}[0pt]{$1$} \\ \hline
\end{tabular}
\end{center}

We update the chain so that $\{(2,1), (2,2)\}=R=S_0\supsetneq S_1\supsetneq S_2=\emptyset$, where $S_1=\{(2,2)\}$. (The chain becomes shorter, because $R$ has decreased.) We choose $i=1$, and get $T_i=\{1\}$ and $c_i(\bidsmi)=2$ for $\bidsmi[1] = 1$. We continuously increase $\varepsilon$ to $1$ until both $f(2,1)$ and $g_1(1)$ become zero simultaneously. After the update, we have
%
%
\begin{center}
\begin{tabular}{|c@{\hspace{6pt}}c@{\hspace{0pt}}}
 \hline
  $x_1$
  &
  \begin{tabular}{||c|c|}
    $0.5$ & $0.5$\\
    \hline
    $0.5$ & $1.5$
  \end{tabular}\\
 \hline
\end{tabular}\quad\quad
\begin{tabular}{|c@{\hspace{6pt}}c@{\hspace{0pt}}}
 \hline
  $x_2$
  &
  \begin{tabular}{||c|c|}
    $1$ & $0$\\
    \hline
    $1$ & $0$
  \end{tabular}\\
 \hline
\end{tabular}\quad\quad
\begin{tabular}{|c@{\hspace{6pt}}c@{\hspace{0pt}}}
 \hline
  $f$
  &
  \begin{tabular}{||c|c|}
    $0$ & $3$\\
    \hline
    $0$ & $0$
  \end{tabular}\\
 \hline
\end{tabular}\quad\quad
\begin{tabular}{|l|c|}
    \cline{1-1} \mbox{ $0.5$} \\ \hline \slashbox {\mbox{ $0$}}{$0$} & \raisebox{-1.4ex}[0pt]{$1$} \\ \hline
\end{tabular}
\end{center}

We update the chain so that $\{(2,2)\}=R=S_0\supsetneq S_1=\emptyset$. We choose $i=1$, and get $\bidsmi=2$ and $c_i(\bidsmi)=2$. We continuously increase $\varepsilon$ up to $1$ until $g_1(2)$ becomes zero. Now we have
\begin{center}
\begin{tabular}{|c@{\hspace{6pt}}c@{\hspace{0pt}}}
 \hline
  $x_1$
  &
  \begin{tabular}{||c|c|}
    $0.5$ & $1.5$\\
    \hline
    $0.5$ & $1.5$
  \end{tabular}\\
 \hline
\end{tabular}\quad\quad
\begin{tabular}{|c@{\hspace{6pt}}c@{\hspace{0pt}}}
 \hline
  $x_2$
  &
  \begin{tabular}{||c|c|}
    $1$ & $0$\\
    \hline
    $1$ & $0$
  \end{tabular}\\
 \hline
\end{tabular}\quad\quad
\begin{tabular}{|c@{\hspace{6pt}}c@{\hspace{0pt}}}
 \hline
  $f$
  &
  \begin{tabular}{||c|c|}
    $0$ & $2$\\
    \hline
    $0$ & $0$
  \end{tabular}\\
 \hline
\end{tabular}\quad\quad
\begin{tabular}{|l|c|}
    \cline{1-1} \mbox{ $0$} \\ \hline \slashbox {\mbox{ $0$}}{$0$} & \raisebox{-1.4ex}[0pt]{$1$} \\ \hline
\end{tabular}
\end{center}

The chain $\{(2,2)\}=R=S_0\supsetneq S_1=\emptyset$ remains the same. Finally, we take $i=2$, and get $\bidsmi=2$ and $c_i(\bidsmi)=2$. We continuously increase $\varepsilon$ up to $2$ until $f(2,2)$ and $g_2(2)$ become zero simultaneously. The final tables are the following.
\begin{center}
\begin{tabular}{|c@{\hspace{6pt}}c@{\hspace{0pt}}}
 \hline
  $x_1$
  &
  \begin{tabular}{||c|c|}
    $0.5$ & $1.5$\\
    \hline
    $0.5$ & $1.5$
  \end{tabular}\\
 \hline
\end{tabular}\quad\quad
\begin{tabular}{|c@{\hspace{6pt}}c@{\hspace{0pt}}}
 \hline
  $x_2$
  &
  \begin{tabular}{||c|c|}
    $1$ & $2$\\
    \hline
    $1$ & $0$
  \end{tabular}\\
 \hline
\end{tabular}\quad\quad
\begin{tabular}{|c@{\hspace{6pt}}c@{\hspace{0pt}}}
 \hline
  $f$
  &
  \begin{tabular}{||c|c|}
    $0$ & $0$\\
    \hline
    $0$ & $0$
  \end{tabular}\\
 \hline
\end{tabular}\quad\quad
\begin{tabular}{|l|c|}
    \cline{1-1} \mbox{ $0$} \\ \hline \slashbox {\mbox{ $0$}}{$0$} & \raisebox{-1.4ex}[0pt]{$0$} \\ \hline
\end{tabular}
\end{center}

Hence, we found a feasible solution $\bx=\{x_i(\bids)\}_{i,\bids}$ to $\LS_2$. The respective solution
$\bz=\{z_i(\bids)\}_{i,\bids}$ to $\LS_1$ looks as follows.
\begin{center}
\begin{tabular}{|c@{\hspace{6pt}}c@{\hspace{0pt}}}
 \hline
  $z_1$
  &
  \begin{tabular}{||c|c|}
    $0.5$ & $0.5$\\
    \hline
    $0.5$ & $0.5$
  \end{tabular}\\
 \hline
\end{tabular}\quad\quad
\begin{tabular}{|c@{\hspace{6pt}}c@{\hspace{0pt}}}
 \hline
  $z_2$
  &
  \begin{tabular}{||c|c|}
    $0$ & $1$\\
    \hline
    $1$ & $0$
  \end{tabular}\\
 \hline
\end{tabular}
\end{center}
In terms of the language of an auction, this reads that the auctioneer offers a random price of $1$ or $2$ to the first bidder and a price that is equal to $\bid_1$ to the second bidder. We note that the resulting auction is not unique due to multiple choices of the coordinate $i$ at every step of our process. It is interesting to notice that we get an asymmetric mechanism although the benchmark is symmetric.

%
%



\section{$\F^{(2)}$ is Attainable}

We show in this section that $\F^{(2)}$ is $\lambda_n$-attainable, for $\lambda_n$ being the number conjectured in~\cite{GoldbergHKSW06}.

\begin{theorem}\label{thm:F2achievable}
$\F^{(2)}(\cdot)$ is $\lambda_n$-attainable for any $n\ge 2$, where
\[\lambda_n = 1-\sum_{i=2}^n \left(\frac{-1}{n}\right)^{i-1} \frac{i}{i-1} {n-1 \choose i-1}.\]
\end{theorem}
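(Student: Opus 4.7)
The plan is to verify the sufficient condition of the characterization theorem (Theorem~\ref{th:characterization-unbounded}) for $f = \F^{(2)}$ and $\lambda = \lambda_\nbidder$. Since $\F^{(2)}$ is a symmetric monotone function, Claim~\ref{cl:symmetry} allows us to restrict attention to symmetric upward closed sets $S \subset [1,\infty)^\nbidder$. By the symmetry of such $S$, all $\nbidder$ projection integrals on the right-hand side of~\eqref{eq:condition} are equal, so the target inequality reduces to
\[
\int_S \F^{(2)}(\vals)\,\weight(\vals)\diff\vals \;\le\; \nbidder\,\lambda_\nbidder \int_{S\proji[1]} \weight(\valsmi[1])\diff\valsmi[1].
\]

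The full-domain case $S = [1,\infty)^\nbidder$ yields equality: the projection integral factorizes as $\prod_{k \neq 1} \int_1^\infty v^{-2}\diff v = 1$, and the identity $\int_{[1,\infty)^\nbidder} \F^{(2)}(\vals)\,\weight(\vals)\diff\vals = \nbidder\,\lambda_\nbidder$ is exactly the computation that produces the lower bound in Theorem~\ref{th:GoldbergHKS_ratio}. The remaining work is to show that no proper symmetric upward closed subset $S$ can push the ratio of the two sides strictly above $\lambda_\nbidder$.

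For general $S$, I would parameterize it by its staircase of order-statistic thresholds: any symmetric upward closed set is determined by a non-increasing sequence $t_1 \ge t_2 \ge \dots \ge t_\nbidder \ge 1$ via $S = \bigcup_{k=1}^\nbidder \{\vals : \val_{(k)} \ge t_k\}$, where $\val_{(k)}$ is the $k$-th largest coordinate. Both sides of the target inequality then unfold into inclusion--exclusion sums indexed by which coordinates cross each threshold. Using $\F^{(2)}(\vals) = \int_0^\infty \ind[\F^{(2)}(\vals) > t]\diff t$ combined with $\{\F^{(2)} > t\} = \bigcup_{k \ge 2} \{k\,\val_{(k)} > t\}$, the LHS decomposes into a superposition of ``at least $k$ coordinates above a given level'' measures, restricted to the staircase. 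I would then run a monotonicity/variational argument: starting from the whole space where equality holds, show that raising any threshold $t_k$ decreases the LHS by at least as much as it decreases the RHS, so that the inequality propagates as $S$ shrinks.

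The main obstacle I anticipate is the pointwise inequality on the boundary of $S$ that underlies this monotonicity step. When $t_k$ is pushed up infinitesimally, the loss on the LHS is a slice integral of $\F^{(2)}$ over the $(\nbidder{-}1)$-dimensional surface $\{\val_{(k)} = t_k\}$, while the loss on the RHS is essentially $\nbidder\,\lambda_\nbidder$ times the measure of the corresponding projected slice. Verifying the correct ratio pointwise, for every admissible staircase, should reduce to a localized version of the same alternating-sum identity that produces the closed form of $\lambda_\nbidder$ in Theorem~\ref{th:GoldbergHKS_ratio}. Managing this combinatorics --- in particular, the interplay between the piecewise-linear structure $\F^{(2)}(\vals) = \max_{2 \le k \le \nbidder} k\,\val_{(k)}$ and the staircase boundary of $S$ --- is the most delicate part of the argument.
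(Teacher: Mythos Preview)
Your plan has a concrete gap at the parameterization step. It is not true that every symmetric upward closed $S \subset [1,\infty)^\nbidder$ has the form $\bigcup_{k=1}^\nbidder \{\vals : \val_{(k)} \ge t_k\}$ for scalars $t_1 \ge \dots \ge t_\nbidder$. Already for $\nbidder = 2$, the set $S = \{(a,b) : a+b \ge 4\}$ is symmetric and upward closed, but its boundary in the region $a \ge b$ is the line segment $a+b=4$, not the L-shaped corner your family produces; similarly, the symmetric upward closure of $\{(5,1),(3,2)\}$ has two corners in the ordered region and is not in your family. In general a symmetric upward closed set is determined by an arbitrary antichain in the ordered simplex, an infinite-dimensional object; your $\nbidder$ parameters capture only a thin slice of these.

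Even if you repair the parameterization (an arbitrary monotone boundary, deformed continuously), the variational step you flag as ``delicate'' is the entire content of the theorem, and there is no reason to expect it to hold pointwise. What you would need is that on every boundary slice the local loss in $\int_S \F^{(2)}\weight$ dominates the local loss in $\nbidder\lambda_\nbidder \int_{S\proji}\weight$; but the projection integral does not vary smoothly with the boundary (shaving a thin layer off $S$ may or may not remove fibers from $S\proji$, depending on global shape), so the comparison is not a local density inequality. The paper does \emph{not} proceed this way. It runs induction on $\nbidder$: using the scaling symmetry $\F^{(2)}(t\vals)=t\,\F^{(2)}(\vals)$ it reduces attainability of $F_{\nbidder+1}$ to that of $G_\nbidder(\bz)=F_{\nbidder+1}(1,\bz)$ on $\support^\nbidder$, decomposes $G_\nbidder = F_\nbidder + H_\nbidder$ with $H_\nbidder(\bz)=\max(0,\nbidder+1-F_\nbidder(\bz))$, invokes the inductive hypothesis for the $F_\nbidder$ part, and handles the $H_\nbidder$ part by the FKG correlation inequality (using that $H_\nbidder$ is monotone decreasing while the indicator of the projected set is monotone increasing under the product measure $\weight$). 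This is a global correlation argument rather than a boundary-pointwise one, and I do not see how to recover it from your outline.
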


For the sake of notational convenience, we employ $F_n(\bids)$ to denote $\F^{(2)}(\bids)$ when the input vector $\bids$ contains $n$ bids.
We let $\support=[1, +\infty)$ and $\support^n=\R^{n}_{\ge 1}$ denote the support of bid vectors.
Our proof is by induction on $n$. The base case $n=2$ is trivial (where $\lambda_2=2$ and $F_2(\cdot)$ is 2-attainable given by the Vickrey auction).
We assume that for all $k\le n$, each $F_k(\cdot)$ is $\lambda_k$-attainable.
In the rest of this section, we show that $F_{n+1}(\cdot)$ is $\lambda_{n+1}$-attainable.

For the equal revenue distribution with density $\weight(\cdot)$, it was shown in \cite{GoldbergHKS04} that
\begin{equation}\label{eq:F2expectation}
\mathbb{E}_{\bids\sim \weight}\big[F_k(\bids)\big] = \int\limits_{\support^{k}}F_k(\bids)\cdot\weight(\bids)\diff\bids=\lambda_k \cdot k.
\end{equation}
We note that $1=\int_{1}^{\infty}w(t)\diff t$ and, therefore, we have
\begin{equation}\label{eq:F2tightset}
\int\limits_{\support^k}F_k(\bids)\cdot\weight(\bids)\diff\bids=\lambda_k\cdot \sum_{i=1}^{k}\ \int\limits_{\ \support^{k-1}}\weight(\bidsmi)\diff\bidsmi.
\end{equation}
Comparing to the inequality \eqref{eq:condition} in Theorem~\ref{th:characterization-unbounded}, we have the following claim.

\begin{claim}
\label{cl:support_tight}
For $k=2,\ldots,n$, each $\lambda_k$ is chosen so that inequality \eqref{eq:condition} is tight for the set $\support^k$.
\end{claim}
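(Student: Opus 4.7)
The plan is to verify the claim by direct computation, combining the known expectation formula \eqref{eq:F2expectation} with a trivial calculation of the marginal integrals on the right-hand side of \eqref{eq:condition}. First, I would take $S = \support^k = [1,\infty)^k$ and observe that its projection along any coordinate is $S\proji = \support^{k-1} = [1,\infty)^{k-1}$. Since the weight factorizes as $\weight(\bidsmi) = \prod_{j \neq i} w(\bid_j)$ with $w(t) = 1/t^2$, and $\int_1^\infty \frac{dt}{t^2} = 1$, each marginal integral $\int_{\support^{k-1}} \weight(\bidsmi)\,d\bidsmi$ equals $1$. Summing over $i=1,\dots,k$ shows that the right-hand side of \eqref{eq:condition} (with $f = F_k$ and $\lambda = \lambda_k$) evaluates to exactly $\lambda_k \cdot k$.

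For the left-hand side, I would directly invoke \eqref{eq:F2expectation}, which gives $\int_{\support^k} F_k(\bids) \cdot \weight(\bids)\,d\bids = \lambda_k \cdot k$. Hence both sides equal $\lambda_k \cdot k$, establishing tightness of \eqref{eq:condition} on $\support^k$. There is essentially no obstacle here: the claim is a restatement of the identity \eqref{eq:F2tightset} already displayed above, packaged in the language of Theorem \ref{th:characterization-unbounded}. The only point worth remarking on is why this is relevant later: tightness of $S = \support^k$ is the base case that will force any smaller $\lambda$ to violate \eqref{eq:condition}, and it also serves as the anchor for the inductive construction that verifies \eqref{eq:condition} for all other upward closed symmetric sets (via Claim \ref{cl:symmetry}).
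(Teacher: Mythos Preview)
Your proposal is correct and essentially identical to the paper's own argument: the claim is stated without a separate proof, as it is just a restatement of identity \eqref{eq:F2tightset}, which the paper derives from \eqref{eq:F2expectation} by noting $\int_1^\infty w(t)\,\diff t = 1$ so that each marginal integral equals $1$. Your remarks on relevance are accurate extra commentary.
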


Let $$G_n(\bz)=F_{n+1}(1,\bz), \quad\forall \bz\in\support^n.$$
By the following claim, to prove that $F_{n+1}(\bids)$ with $\bids\in\support^{n+1}$ is $\lambda_{n+1}$-attainable,
it suffices to show that $G_n(\bz)$ with $\bz\in\support^n$ is $\lambda_{n+1}$-attainable.

\begin{claim}\label{cl:G_n}
If $G_n(\bz)$ with $\bz\in\support^n$ is $\lambda_{n+1}$-attainable, then $F_{n+1}(\bids)$ with $\bids\in\support^{n+1}$ is $\lambda_{n+1}$-attainable.
\end{claim}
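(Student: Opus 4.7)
The plan is to verify condition~\eqref{eq:condition} of Theorem~\ref{th:characterization-unbounded} for the symmetric function $F_{n+1}$ with ratio $\lambda_{n+1}$. By Claim~\ref{cl:symmetry} it suffices to check \eqref{eq:condition} for symmetric upward closed sets $S\subseteq\support^{n+1}$, and for such $S$ every projection $S\proji[j]$ has the same $\weight$-mass, so the right-hand side of \eqref{eq:condition} equals $(n+1)\lambda_{n+1}\int_{S\proji[1]}\weight(\bidsmi[1])\,d\bidsmi[1]$.

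The main step decomposes the left-hand side according to which coordinate is smallest. By symmetry, $\int_S F_{n+1}\weight\,d\bids$ equals $(n+1)$ times its contribution from the slice $\{b_1<b_k,\,k\ge 2\}$; on this slice I parametrize by $m=b_1\ge 1$ and $\bz=(b_2,\dots,b_{n+1})/m\in\support^n$, and combine the homogeneity identity $F_{n+1}(m,m\bz)=m\,G_n(\bz)$ with $\weight(m,m\bz)\cdot m^n = m^{-(n+1)}\weight(\bz)$ to obtain
\[
\int_S F_{n+1}\weight\,d\bids \;=\; (n+1)\int_1^\infty m^{-(n+1)}\int_{T_m}G_n(\bz)\weight(\bz)\,d\bz\,dm,
\]
where $T_m:=\{\bz\in\support^n:(m,m\bz)\in S\}$ is readily checked to be symmetric and upward closed in $\support^n$. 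Applying the hypothesis (attainability of $G_n$ with ratio $\lambda_{n+1}$) to each $T_m$, and using the symmetry of $T_m$ to collapse the sum over its $n$ projections to a single term, gives $\int_{T_m} G_n\weight\,d\bz\le n\lambda_{n+1}\int_{(T_m)\proji[1]}\weight(\bzmi[1])\,d\bzmi[1]$. Substituting $\vect{u}_{-1}=m\bzmi[1]$ rewrites this projection as an integral over $D_m:=\{\vect{u}_{-1}\in[m,\infty)^{n-1}:\exists u_1\ge m,\,(m,u_1,\vect{u}_{-1})\in S\}$ and yields
\[
\int_S F_{n+1}\weight\,d\bids\;\le\;n(n+1)\lambda_{n+1}\int_1^\infty m^{-2}\int_{D_m}\weight(\vect{u}_{-1})\,d\vect{u}_{-1}\,dm.
\]

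A fully parallel decomposition of $\int_{S\proji[1]}\weight\,d\bidsmi[1]$ along the minimum of $(b_2,\dots,b_{n+1})$, followed by the analogous rescaling $\vect{c}=m\cdot(\text{rest})$, produces $n\int_1^\infty m^{-2}\int_{E_m}\weight(\vect{c})\,d\vect{c}\,dm$, where $E_m:=\{\vect{c}\in[m,\infty)^{n-1}:\exists b_1\ge 1,\,(b_1,\vect{c},m)\in S\}$. The prefactors $n(n+1)\lambda_{n+1}$ match on the two sides, so the whole argument boils down to the pointwise set inclusion $D_m\subseteq E_m$ for every $m\ge 1$. That inclusion is immediate: any witness $u_1\ge m\ge 1$ for membership in $D_m$ satisfies $(m,u_1,\vect{u}_{-1})\in S$, and by the symmetry of $S$ this is the same as $(u_1,\vect{u}_{-1},m)\in S$, so $u_1$ serves as a valid $b_1$ for $E_m$. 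I expect the main (mild) obstacle to be the careful bookkeeping of coordinate labels through the two nested changes of variables; once that is handled, the chain of inequalities assembles cleanly into~\eqref{eq:condition}.
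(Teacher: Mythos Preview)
Your proof is correct but follows a genuinely different route from the paper. The paper gives a direct, constructive argument: starting from a $\lambda_{n+1}$-competitive auction $\auction$ for $G_n$, it builds an auction for $F_{n+1}$ by identifying, for each bidder $i$, the smallest other bid $b_{i^*}$, scaling the remaining bids by $1/b_{i^*}$, running $\auction$ on the resulting $n$-vector, and scaling the offered price back by $b_{i^*}$. The revenue bound then follows in one line from the homogeneity identity $F_{n+1}(\bids)=b_{j^*}\,G_n(\bids_{-j^*}/b_{j^*})$.

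You instead work entirely on the integral side of the characterization: decompose $\int_S F_{n+1}\weight$ by the minimal coordinate, rescale to pull out $G_n$, invoke attainability of $G_n$ as the inequality~\eqref{eq:condition} on each slice $T_m$, and then match against the analogous decomposition of the projection integral via the inclusion $D_m\subseteq E_m$. This is the same scaling idea expressed analytically rather than mechanistically. What the paper's approach buys is an explicit auction (useful in its own right and avoiding a second appeal to the characterization theorem); what your approach buys is uniformity with the rest of Section~5, where every step is phrased as an inequality between $\weight$-integrals. One small remark: in your RHS decomposition the phrase ``analogous rescaling $\vect{c}=m\cdot(\text{rest})$'' is slightly off---no rescaling is actually needed there, since $\vect{c}=(b_2,\dots,b_n)$ already lives in $[m,\infty)^{n-1}$ and the $m^{-2}$ factor comes directly from $w(b_{n+1})$. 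Your stated formula and the definition of $E_m$ are nonetheless correct.
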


\begin{proof}
To prove the claim, we convert a $\lambda_{n+1}$-competitive auction $\auction$ for $G_n(\bz)$ to one for $F_{n+1}(\bids)$.
For each bidder $i$ and $\bids\in\support^{n+1}$, we may find a coordinate $i^*\neq i$ with the smallest bid $b_{i^*}=\min_{j\neq i}\{\bid_j\}$. If the minimum is not unique, we take $i^*$ with largest index. Our auction for $F_{n+1}(\bids)$ offers price $p_i\cdot b_{i^*}$ to bidder $i$ when observing $\bidsmi$, where $p_i$ is the price offered to $i$ in auction $\auction$ when observing bids $\frac{\bids_{(-i,-i^*)}}{b_{i^*}}$ ($\bids_{(-i,-i^*)}$ is the bid vector of length $n-1$ obtained from $\bids$ by removing $i$ and $i^*$). We shall prove that this auction has expected revenue of at least $\frac{F_{n+1}(\bids)}{\lambda_{n+1}}$.

First, we notice that $F_k(\bids)$ scales linearly with $\bids$, i.e.,
$F_k(t\cdot \bids)=t\cdot F_k(\bids)$ for any $t\in\R$. Let $j^*$ be the coordinate with the smallest bid $b_{j^*}=\min_{1\le j\le n+1}\{\bid_j\}$. We take $j^*$ with the largest index if the minimum is not unique. Then for any bidder $i\neq j^*$, $i^*=j^*$ in the above construction. We will prove that the expected revenue from all bidders except $j^*$ is at least $\frac{F_{n+1}(\bids)}{\lambda_{n+1}}$. The expected revenue from all bidders except $j^*$ in our auction for $F_{n+1}(\bids)$ is exactly $b_{j^*}$ times the expected revenue of $\auction$ on the bid vector $\frac{\bids_{-j^*}}{b_{j^*}}$. The claim follows from the following relation between $G_n(\bz)$ and $F_{n+1}(\bids)$
\[F_{n+1}(\bids)= b_{j^*} \cdot F_{n+1}\left(1, \frac{\bids_{-j^*}}{b_{j^*}}\right)= b_{j^*} \cdot G_n\left(\frac{\bids_{-j^*}}{b_{j^*}}\right).\]
%
%
%
\end{proof}

We now show that $G_n(\bz)$ behaves similarly as $F_{n+1}(\bids)$ in \eqref{eq:F2expectation}.

\begin{claim}\label{cl:Gntight}
The set $\support^n$ is tight for the benchmark $G_n(\bz)$ and competitive ratio $\lambda_{n+1}$ in~\eqref{eq:rev_condition_old}, i.e.,
\[
\int\limits_{\support^n}G_n(\bz)\cdot\weight(\bz)\diff\bz = \lambda_{n+1}\cdot n.
\]
\end{claim}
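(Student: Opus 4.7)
The plan is to derive the claim directly from equation \eqref{eq:F2expectation} applied to $F_{n+1}$, by splitting the $(n+1)$-dimensional integral according to which coordinate is the smallest and then rescaling that coordinate to $1$.

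Concretely, I would start from the identity $\int_{\Lambda^{n+1}} F_{n+1}(\bids)\weight(\bids)\diff\bids=\lambda_{n+1}(n+1)$. For each $i\in\{1,\dots,n+1\}$ let
\[
A_i \;=\; \int_{\Lambda^{n+1}} F_{n+1}(\bids)\weight(\bids)\mathbf{1}\bigl[b_i=\min_j b_j\bigr]\diff\bids.
\]
Since $F_{n+1}$ is symmetric in its arguments and the set where the minimum is not unique has measure zero under $\weight$, we have $A_1=A_2=\dots=A_{n+1}$ and $\sum_i A_i=\lambda_{n+1}(n+1)$, hence $A_1=\lambda_{n+1}$.

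Next I would evaluate $A_1$ by the substitution $b_1=t$, $b_j=t\cdot z_j$ for $j=2,\dots,n+1$ with $z_j\in[1,\infty)$ (this captures exactly the region where coordinate $1$ attains the minimum). Using the scale-invariance $F_{n+1}(t,tz_2,\dots,tz_{n+1})=t\cdot F_{n+1}(1,\bz)=t\cdot G_n(\bz)$ and the elementary computation $w(tz_j)\cdot t\diff z_j=\frac{1}{t}\, w(z_j)\diff z_j$, the Jacobian contribution from the $n$ rescaled coordinates multiplies the integrand by $t^{-n}$. Therefore
\[
A_1 \;=\; \int_1^\infty w(t)\cdot t \cdot t^{-n}\diff t \;\cdot\; \int_{\Lambda^n} G_n(\bz)\weight(\bz)\diff\bz
\;=\; \left(\int_1^\infty \frac{\diff t}{t^{n+1}}\right)\cdot \int_{\Lambda^n} G_n(\bz)\weight(\bz)\diff\bz.
\]
The one-dimensional integral equals $\tfrac{1}{n}$, so combining $A_1=\lambda_{n+1}$ with this identity yields $\int_{\Lambda^n} G_n(\bz)\weight(\bz)\diff\bz=n\cdot\lambda_{n+1}$, which is exactly the claim.

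I don't anticipate a genuine obstacle: the only subtlety is making sure the $n+1$ "minimum-coordinate" regions genuinely partition $\Lambda^{n+1}$ up to a null set (true because $\weight$ is absolutely continuous, so ties have measure zero) and that the substitution's domain is correctly identified as $z_j\ge 1$. Everything else is a clean change of variables plus the symmetry of $F_{n+1}$.
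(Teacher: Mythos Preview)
Your proposal is correct and follows essentially the same approach as the paper: both start from \eqref{eq:F2expectation} for $F_{n+1}$, partition $\Lambda^{n+1}$ into $n+1$ symmetric pieces according to which coordinate is smallest, and then perform the same rescaling substitution $b_j=t z_j$ using the homogeneity $F_{n+1}(t,t\bz)=t\,G_n(\bz)$ to reduce to the one-dimensional integral $\int_1^\infty t^{-(n+1)}\diff t=\tfrac{1}{n}$. Your remark that ties have measure zero is a small extra bit of care, but otherwise the arguments are identical.
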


\begin{proof}
Given the equation \eqref{eq:F2expectation}, we know that
\begin{equation*}
\lambda_{n+1}\cdot(n+1)=\int\limits_{\support^{n+1}}F_{n+1}(\bids)\weight(\bids)\diff\bids=
\sum_{i=1}^{n+1}\ \int\limits_{1}^{\infty}w(\bidi)\int\limits_{\bidi\cdot\support^n}F_{n+1}(\bids)\weight(\bidsmi)\diff\bidsmi\diff\bidi.
\end{equation*}
The last equality holds true because $\support^{n+1}$ can be divided into $n+1$ disjoint sets, where the $i$-th set contains those $\bids$'s
with the smallest coordinate $\bidi$ and $\bidsmi\in\bidi\cdot\support^n$. We further write
\begin{equation*}
\lambda_{n+1}\cdot(n+1)=(n+1)\cdot\int\limits_{1}^{\infty}w(t)\int\limits_{t\cdot\support^n}F_{n+1}(t,\bidsmi[(n+1)])\weight(\bidsmi[(n+1)])\diff\bidsmi[(n+1)]\diff t.
\end{equation*}
Recall that $w(t)= \frac{1}{t^2}$ and $\weight(\bids)=\prod_{i}w(\bid_i)$.
To simplify notation in the next expression, we let $\bx=\bidsmi[(n+1)]$ and $\bz=\frac{1}{t}\cdot\bx$.
\begin{eqnarray*}
\lambda_{n+1} &=& \int\limits_{1}^{\infty}w(t)\int\limits_{t\cdot\support^n}F_{n+1}(t,\bx)\weight(\bx)\diff\bx\diff t\\
&=& \int\limits_{1}^{\infty}t^{-2}\int\limits_{\support^n}t\cdot F_{n+1}(1,\bz)\cdot t^{-2n}\weight(\bz)\cdot t^{n}\diff\bz\diff t\\
&=& \int\limits_{1}^{\infty}t^{-(n+1)}\diff t\int\limits_{\support^n}F_{n+1}(1,\bz)\cdot\weight(\bz)\diff\bz\\
&=& \frac{1}{n}\cdot\int\limits_{\support^n}G_n(\bz)\cdot\weight(\bz)\diff\bz.
\end{eqnarray*}
Thus, the claim follows.
\end{proof}

Let
$$H_n(\bz)=\max\big(0,n+1-F_n(\bz)\big), \quad\forall \bz\in\support^n.$$
We observe that $G_n(\bz)=F_{n+1}(1,\bz)=\max(n+1,F_n(\bz))$ for $\bz\in\support^n$;
thus, $G_n(\bz)=F_n(\bz)+H_n(\bz)$. We note that due to the induction hypothesis, $F_n(\bz)$ is $\lambda_n$-attainable.
This means that for every symmetric upward closed set $S\subset\support^n$, we have
\begin{equation*}
\int\limits_{S}F_n(\bz)\cdot\weight(\bz)\diff\bz\le \lambda_n\cdot\sum_{i=1}^n \ \int\limits_{S\proji}\weight(\bzmi)\diff\bzmi.
\end{equation*}

We show in Claim~\ref{cl:Hn} below that for every symmetric upward closed set $S\subset\support^n$,
\[
\int\limits_{S}H_n(\bz)\cdot\weight(\bz)\diff\bz\le (\lambda_{n+1}-\lambda_n)\cdot\sum_{i=1}^n\ \int\limits_{S\proji}\weight(\bzmi)\diff\bzmi.
\]
If we combine the above two inequalities, then we get for every symmetric upward closed set $S\subset\support^n$,
\begin{equation*}
\int\limits_{S}G_n(\bz)\cdot\weight(\bz)\diff\bz\le \lambda_{n+1} \cdot\sum_{i=1}^n\ \int\limits_{S\proji}\weight(\bzmi)\diff\bzmi.
\end{equation*}
We observe that $F_n(\cdot), G_n(\cdot)$ and $H_n(\cdot)$ all are symmetric functions.
By Theorem~\ref{cl:symmetry}, we know that $G_n(\bz)$ is $\lambda_{n+1}$-attainable.
Hence, we are only left to prove the following Claim~\ref{cl:Hn} in order to show $\lambda_{n+1}$-attainability of $F_{n+1}(\bids)$
and complete the proof of Theorem \ref{thm:F2achievable}.

%
%

\begin{claim}\label{cl:Hn}
For every symmetric upward closed set $S\subset\support^n$,
\begin{equation}\label{eq:h_condition}
\int\limits_{S}H_n(\bz)\cdot\weight(\bz)\diff\bz\le (\lambda_{n+1}-\lambda_n)\cdot\sum_{i=1}^n\ \int\limits_{S\proji}\weight(\bzmi)\diff\bzmi.
\end{equation}
\end{claim}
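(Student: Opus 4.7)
The plan is to reduce inequality \eqref{eq:h_condition} to a direct application of the FKG correlation inequality on $\support^{n-1}$, with Claim~\ref{cl:Gntight} providing the normalization. The key observation is that after integrating out one coordinate, the one-coordinate marginal $\Psi(\bzmi):=\int_1^\infty H_n(z,\bzmi)w(z)\,dz$ is a decreasing function of $\bzmi$, while for any symmetric upward closed $S$ the indicator $\ind_{S\proji}$ is an increasing function; paired against the product probability measure $\weight(\bzmi)\,d\bzmi$, these opposite monotonicities are exactly what FKG consumes.

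Concretely, by the symmetry of $S$ the integrals $\int_{S\proji}\weight(\bzmi)\,d\bzmi$ are independent of $i$, so $\sum_{i=1}^n\int_{S\proji}\weight\,d\bzmi = n\int_{S\proji}\weight\,d\bzmi$ for any fixed coordinate $i$, and it suffices to show $\int_S H_n\weight\,d\bz \le n(\lambda_{n+1}-\lambda_n)\int_{S\proji}\weight\,d\bzmi$. Fix such a coordinate $i$ and parameterize $S$ by its fiber threshold $t(\bzmi)=\inf\{z\in\support:(z,\bzmi)\in S\}\in[1,\infty]$, so that $S=\{(z,\bzmi):z\ge t(\bzmi)\}$ and $S\proji=\{\bzmi:t(\bzmi)<\infty\}$ is upward closed and symmetric. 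Fubini together with the trivial bound $\int_{t(\bzmi)}^{\infty}H_n\,w\,dz\le\Psi(\bzmi)$ gives
\begin{equation*}
\int_S H_n(\bz)\weight(\bz)\,d\bz \;=\; \int_{S\proji}\weight(\bzmi)\int_{t(\bzmi)}^{\infty}H_n(z,\bzmi)w(z)\,dz\,d\bzmi \;\le\; \int_{S\proji}\weight(\bzmi)\,\Psi(\bzmi)\,d\bzmi.
\end{equation*}

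Two monotonicity observations then set up FKG. First, $\Psi$ is decreasing in $\bzmi$: if $\bzmi'\ge\bzmi$ coordinate-wise, then $F_n(z,\bzmi')\ge F_n(z,\bzmi)$ for every $z$ (since $F_n$ is monotone in bids), so $H_n(z,\bzmi')\le H_n(z,\bzmi)$ and hence $\Psi(\bzmi')\le\Psi(\bzmi)$. Second, $\ind_{S\proji}$ is increasing because $S\proji$ is upward closed. Since $\weight(\bzmi)\,d\bzmi = \prod_{k\neq i}(dz_k/z_k^2)$ is a product of equal-revenue probability measures (each $w(z)=1/z^2$ integrates to $1$ on $[1,\infty)$), the FKG correlation inequality for product measures yields
\begin{equation*}
\int_{S\proji}\weight(\bzmi)\,\Psi(\bzmi)\,d\bzmi \;\le\; \Big(\int_{S\proji}\weight\,d\bzmi\Big)\Big(\int_{\support^{n-1}}\weight\,\Psi\,d\bzmi\Big).
\end{equation*}

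Finally, Fubini combined with Claim~\ref{cl:Gntight} and the identity $\int_{\support^n}F_n\weight\,d\bz=n\lambda_n$ (equation \eqref{eq:F2expectation}) give $\int_{\support^{n-1}}\weight\,\Psi\,d\bzmi = \int_{\support^n}H_n\weight\,d\bz = n(\lambda_{n+1}-\lambda_n)$, where we use $H_n=G_n-F_n$ pointwise. Chaining the three displayed inequalities delivers \eqref{eq:h_condition}. I expect the whole argument to be short; the main conceptual point is recognizing that the equal-revenue density makes $\weight$ a product probability measure on $\support^{n-1}$, which is exactly the structural feature that allows FKG to pair the decreasing marginal $\Psi$ against the increasing indicator $\ind_{S\proji}$ and thereby replace any non-tight set $S\proji$ by the maximal set $\support^{n-1}$ on which Claim~\ref{cl:Gntight} supplies the correct average.
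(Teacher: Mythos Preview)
Your proof is correct. Both you and the paper reduce \eqref{eq:h_condition} to an application of the FKG inequality on $\support^{n-1}$ against the increasing indicator $\ind_{S\proji}$ and use Claim~\ref{cl:Gntight} together with \eqref{eq:F2expectation} for the normalization, so the high-level strategy is the same. The difference lies in how the reduction to $\support^{n-1}$ is carried out. The paper partitions $S$ according to which coordinate is the maximum and exploits the special property that $F_n$ (and hence $H_n$) does not change when the largest bid is increased; this yields the exact identity $\int_S H_n\,\weight\,d\bz = n\int_{X}\frac{H_n(\bzmi[n],\max(\bzmi[n]))}{\max(\bzmi[n])}\,\weight(\bzmi[n])\,d\bzmi[n]$, and the decreasing function fed into FKG is $H'(\bzmi[n])=H_n(\bzmi[n],\max(\bzmi[n]))/\max(\bzmi[n])$. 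You instead fix a single coordinate, integrate it out via Fubini, and simply discard the fiber constraint by bounding $\int_{t(\bzmi)}^{\infty}\le\int_{1}^{\infty}$; the decreasing function you feed into FKG is the marginal $\Psi(\bzmi)=\int_{1}^{\infty}H_n(z,\bzmi)w(z)\,dz$. Your route is slightly more elementary and more robust, since it uses only that $H_n$ is nonnegative and coordinatewise decreasing, and never invokes the ``$F_n$ is independent of the top bid'' property; the paper's route, in exchange, produces an equality rather than an inequality at the first step.
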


\begin{proof}
Since $S$ is a symmetric set, all projections $S\proji$ are in fact equal to the same set, denoted by $X\subset\support^{n-1}.$
We notice that $S$ can be divided into $n$ (almost) disjoint sets $S=S_1\cup\cdots\cup S_n$, where each $S_i=\big\{(\bzmi,z_i) ~|~ \bzmi\in X, z_i\ge\max(\bzmi)\big\}$ and $\max(\bzmi)$ denotes the largest value of all coordinates of $\bzmi$.
Furthermore, note that for any $z_i\in[\max(\bzmi),\infty)$, the value of the benchmark $F_n(\bzmi,z_i)$ is fixed; thus, the value of $H_n(\bzmi,z_i)$ is also fixed.
Hence, we have the following equalities.
\begin{eqnarray}
\label{eq:LHS_h_condition}
\int\limits_{S}H_n(\bz)\cdot\weight(\bz)\diff\bz&=&\sum_{i=1}^{n}\ \int\limits_{S_i}H_n(\bz)\cdot\weight(\bz)\diff\bz\notag\\
&=& n\cdot\int\limits_{X} \weight(\bzmi[n])\int\limits_{\max(\bzmi[n])}^{\infty}H_n(\bzmi[n],t)\cdot w(t)\diff t\diff\bzmi[n]\notag\\
&=& n\cdot\int\limits_{X} \frac{H_n(\bzmi[n],\max(\bzmi[n]))}{\max(\bzmi[n])}\weight(\bzmi[n])\diff\bzmi[n].
\end{eqnarray}

By Claim \ref{cl:Gntight}, we know that
$$\int\limits_{\support^n}G_n(\bz)\cdot\weight(\bz)\diff\bz = \lambda_{n+1}\cdot n = \lambda_{n+1}\cdot\sum_{i=1}^n \ \int\limits_{\support^{n-1}}\weight(\bzmi)\diff\bzmi.$$
Further, equality \eqref{eq:F2tightset} says
$$\int\limits_{\support^n}F_n(\bz)\cdot\weight(\bz)\diff\bz = \lambda_n\cdot\sum_{i=1}^n \ \int\limits_{\support^{n-1}}\weight(\bzmi)\diff\bzmi.$$
Thus,
$$\int\limits_{\support^n}H_n(\bz)\cdot\weight(\bz)\diff\bz = \int\limits_{\support^n}\big(G_n(\bz)-F_n(\bz)\big)\cdot\weight(\bz)\diff\bz = (\lambda_{n+1}-\lambda_n)\cdot\sum_{i=1}^n\ \int\limits_{\support^{n-1}}\weight(\bzmi)\diff\bzmi.$$
That is, $\support^n$ is a tight set for \eqref{eq:h_condition}. This implies that
\begin{eqnarray}
\label{eq:h_tight}
(\lambda_{n+1}-\lambda_n)\cdot n&=&(\lambda_{n+1}-\lambda_n)\cdot \sum_{i=1}^n\ \int\limits_{\support^{n-1}}\weight(\bzmi)\diff\bzmi=
\int\limits_{\support^n}H_n(\bz)\cdot\weight(\bz)\diff\bz  \notag \\
&=&n\cdot\int\limits_{\support^{n-1}} \frac{H_n(\bzmi[n],\max(\bzmi[n]))}{\max(\bzmi[n])}\weight(\bzmi[n])\diff\bzmi[n]
\end{eqnarray}
Therefore,
\begin{eqnarray}
\label{eq:RHS_h_condition}
(\lambda_{n+1}-\lambda_n)\cdot\sum_{i=1}^n\ \int\limits_{S\proji}\weight(\bzmi)\diff\bzmi &=&
(\lambda_{n+1}-\lambda_n)\cdot n\cdot \int\limits_{X}\weight(\bzmi[n])\diff\bzmi[n]\\
&=& n\cdot \int\limits_{\support^{n-1}} \frac{H_n(\bzmi[n],\max(\bzmi[n]))}{\max(\bzmi[n])}\weight(\bzmi[n])\diff\bzmi[n]
\cdot \int\limits_{X}\weight(\bzmi[n])\diff\bzmi[n]\notag
\end{eqnarray}

Combining \eqref{eq:LHS_h_condition} and \eqref{eq:RHS_h_condition}, we are left to show that
\begin{equation}
\label{eq:LHS_h_RHS}
\int\limits_{X} \frac{H_n(\bzmi[n],\max(\bzmi[n]))}{\max(\bzmi[n])}\weight(\bzmi[n])\diff\bzmi[n]
\le
\int\limits_{\support^{n-1}} \frac{H_n(\bzmi[n],\max(\bzmi[n]))}{\max(\bzmi[n])}\weight(\bzmi[n])\diff\bzmi[n]
\cdot \int\limits_{X}\weight(\bzmi[n])\diff\bzmi[n]
\end{equation}
Since $H_n(\bz)$ is a non-negative monotonically decreasing function, the function
$H'(\bzmi[n])\triangleq \frac{H_n(\bzmi[n],\max(\bzmi[n]))}{\max(\bzmi[n])}$
is also non-negative monotonically decreasing. Let $\ind_X(\bzmi[n])$ denote the characteristic function of the set $X$.
Note that $\ind_X(\cdot)$ is a non-negative monotonically increasing function. Now the inequality \eqref{eq:LHS_h_RHS}
simply reads as
\[
\int\limits_{\support^{n-1}} H'(\bzmi[n])\cdot\ind_X(\bzmi[n])\cdot\weight(\bzmi[n])\diff\bzmi[n]
\le
\int\limits_{\support^{n-1}} H'(\bzmi[n])\cdot \weight(\bzmi[n])\diff\bzmi[n] \cdot
\int\limits_{\support^{n-1}} \ind_X(\bzmi[n])\cdot\weight(\bzmi[n])\diff\bzmi[n].
\]
The last inequality for an arbitrary non-negative monotonically increasing function (in our case it is $\ind_X(\cdot)$), an arbitrary non-negative monotonically decreasing function (in our case it is $H'(\cdot)$), and an arbitrary product measure (in our case it is $\weight(\bzmi)$) is known as a special case of Fortuin-Kasteleyn-Ginibre (FKG) inequality~\cite{FortuinKG71} (in the uni-variate case this inequality is called Chebyshev Integral inequality). Therefore, the inequality holds and the claim follows.

%
%
\end{proof}


\section{$k$-Item Vickrey Auction}

In this section, we consider the benchmark $\maxV(\bids)= \max_{1\le k<n} \ k\cdot \bid_{k+1}$ provided by the largest revenue of the $k$-item Vickrey auction across all possible values of supply $k$.

First of all, we calculate the lower bound on the competitive ratio for $\maxV(\cdot)$ using similar approach as in~\cite{GoldbergHKS04}.
We employ equal revenue distribution $\bn$ of bid vectors, that is i.i.d.~with the density function $\weight(b)=\frac{1}{b^2}$, cumulative density $1-\frac{1}{b}$, and the support $[1,\infty)$.
The key technical problem is to compute the expected value of the benchmark $\maxV(\bn)$.
Following~\cite{GoldbergHKS04}, we compute the probability $\prob{\maxV(\bn)\geq z}$ for any given $z$. Since $\maxV(\bn)$ is at least $n-1$, we only need to compute the probability for $z\ge n-1$. Let a random variable $V_i$ be the $i$-th largest bid in $\bn$. We further define a set of random variables as
\[F_{n,k}=\max_{i=1,2,\ldots, n} (k+i-1)\cdot V_i.\]
We note that so far we follow the idea of~\cite{GoldbergHKS04}, but adjust definitions accordingly to the benchmark $\maxV(\cdot)$.
Intuitively, $F_{n,k}$ captures the value of $\maxV(\cdot)$ given $k$ additional bidders with bid $V_1$.
Let ${\cal H}_i$ denote the event
\[V_i \geq \frac{z}{k+i-1} \mbox{ \ and \ } \bigwedge_{j=i+1, i+2, \ldots, n} V_j <   \frac{z}{k+j-1}. \]
The probability of ${\cal H}_i$ can be written as
\[\prob{{\cal H}_i}= {n\choose i} \left(\frac{k+i-1}{z}\right)^i \prob{F_{n-i, k+i}<z}.\]

Since ${\cal H}_i$'s are mutually exclusive and the event $F_{n,k}\geq z$ is the union of  ${\cal H}_i$ for $i=1,2,\cdots, n$, we get
\begin{equation}\label{equ:F-n-k}
\prob{F_{n,k}\geq z} = \sum_i \prob{{\cal H}_i} = \sum_i {n\choose i} \left(\frac{k+i-1}{z}\right)^i \prob{F_{n-i, k+i}<z}.
\end{equation}
This gives a recursive relation for $\prob{F_{n,k}\geq z} $ and the boundary condition is $\prob{F_{0,k}\geq z}=0 $.

We shall prove that
\[\prob{F_{n,k}\geq z}= 1- \frac{(z-n-k+1) (z+1-k)^{n-1}}{z^n} .\]
We verify this by induction. The base case $n=0$ can be verified directly. Further, by \eqref{equ:F-n-k} and by introduction hypothesis, we have
\begin{align*}
\prob{F_{n,k}\geq z} & = \sum_{i=1}^n {n\choose i} \left(\frac{k+i-1}{z}\right)^i \big(1-\prob{F_{n-i, k+i} < z}\big)\\
& = \sum_{i=1}^n {n\choose i} \left(\frac{k+i-1}{z}\right)^i \frac{(z-n-k+1) (z+1-k-i)^{n-i-1}}{z^{n-i}}\\
& = \frac{(z-n-k+1)}{z^n} \sum_{i=1}^n {n\choose i} (k+i-1)^i (z+1-k-i)^{n-i-1}
\end{align*}
By a version of Abel's Identity~\cite{GoldbergHKS04}, we have
\[\sum_{i=0}^n {n\choose i} (k+i-1)^i (z+1-k-i)^{n-i-1}=\frac{z^n}{z-n-k+1}.\]
Substituting this back, we get
\begin{eqnarray*}
\prob{F_{n,k}\geq z} &=& \frac{(z-n-k+1)}{z^n} \left(\frac{z^n}{z-n-k+1} -(z+1-k)^{n-1}\right) \\
&=& 1- \frac{(z-n-k+1)(z+1-k)^{n-1}}{z^n}.
\end{eqnarray*}
This completes the proof of the inductive step.

Therefore, we have
\[\prob{\maxV(\bn)\geq z}=\prob{F_{n,0}\geq z}= 1- \frac{(z-n+1)(z+1)^{n-1}}{z^n}.\]
Now we can get the expectation by the following calculation.
\begin{align*}
\expect{\maxV(\bn)} & =\int_{0}^{\infty} \prob{\maxV(\bn)\geq z} dz \\
&= n-1+  \int_{n-1}^{\infty}\left( 1- \frac{(z-n+1)(z+1)^{n-1}}{z^n}\right) dz\\
&=n-1+  \int_{n-1}^{\infty}\frac{\sum_{i=0}^{n-2} {n \choose i} (n-i-1) z^i }{z^n} dz\\
&=n-1+  \sum_{i=0}^{n-2} {n \choose i} (n-i-1) \int_{n-1}^{\infty}\frac{ 1 }{z^{n-i}} dz\\
&=n-1+  \sum_{i=0}^{n-2} {n \choose i} (n-i-1) \frac{1}{(n-i-1) (n-1)^{n-i-1}}\\
&=n-1+  \frac{1}{(n-1)^{n-1}} \big(n^n- (n-1)^n - n (n-1)^{n-1}\big)\\
&=n \left(\frac{n}{n-1}\right)^{n-1}-n.
\end{align*}
Thus, we get a lower bound on the competitive ratio for $\maxV(\cdot)$
\[\gamma_n \triangleq \frac{\expect{\maxV(\bn)}}{n} = \left(\frac{n}{n-1}\right)^{n-1}-1,\]
where the denominator is $n$, as the expected revenue of any auction is at most $n$ on the equal revenue distribution.
Finally, we note that $\gamma_n$ approaches $e-1$ as $n$ goes to infinity.

\medskip
On the other hand, we can show that $\maxV(\cdot)$ is $\gamma_n$-attainable by our characterization theorem.
The proof is almost identically to our argument for $\F^{(2)}$ in the previous section. We only replace $\lambda_n$ by $\gamma_n$, $\F^{(2)}(\cdot)$ by $\maxV(\cdot)$, and $H_n(\bz)=\max\big(0,n+1-F_n(\bz)\big)$
by $H_n(\bz)=\max\big(0,n-F_n(\bz)\big)$.  In the argument for $\F^{(2)}$, we do not use any explicit formula for $\F^{(2)}$, but employ a few properties such as monotonicity, symmetry, linearity, and the fact that the function value does not change if one increases the largest bid. These properties continue to hold for the benchmark $\maxV(\cdot)$. The definition of $H_n(\bz)=\max\big(0,n+1-F_n(\bz)\big)$ is the only place where we use the formula for $\F^{(2)}$, which is replaced by $H_n(\bz)=\max\big(0,n-F_n(\bz)\big)$ for $\maxV(\cdot)$. Note that the properties that we use are that $H_n(\bz)$ is decreasing and does not change if one increases the largest bid, which continue to hold for the modified $H_n(\bz)$. Therefore, the whole proof of the attainability of $\F^{(2)}$ carries over to the benchmark $\maxV(\cdot)$.

We summarize our results in the following theorem.

\begin{theorem}
For the benchmark $\maxV(\cdot)$ and any $n\ge 2$, the optimal competitive ratio of a truthful auction is $\gamma_n = \big(\frac{n}{n-1}\big)^{n-1}-1$.
\end{theorem}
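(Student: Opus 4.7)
The plan is to establish both the lower bound and the upper bound of $\gamma_n = \bigl(\frac{n}{n-1}\bigr)^{n-1}-1$ on the competitive ratio with respect to $\maxV(\cdot)$, mirroring the structure of the $\F^{(2)}$ argument but adapted to the maximum-Vickrey benchmark.

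For the lower bound, I would follow the $\F^{(2)}$ template from \cite{GoldbergHKS04} and evaluate $\expect{\maxV(\bn)}$ when $\bn$ is drawn from the $n$-fold product of equal-revenue distributions. The main technical step is to compute $\Prx{\maxV(\bn)\ge z}$ for $z\ge n-1$. Define auxiliary variables $F_{n,k}=\max_i (k+i-1)\cdot V_i$ (so that $\maxV=F_{n,0}$), partition the event $\{F_{n,k}\ge z\}$ according to the smallest index $i$ for which the threshold is exceeded, and obtain a recursion in $(n,k)$. Solving the recursion by induction (the closed form will be $1 - (z-n-k+1)(z+1-k)^{n-1}/z^n$) requires a version of Abel's identity. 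Integrating $\Prx{\maxV(\bn)\ge z}$ from $0$ to $\infty$, splitting at $z=n-1$ and using elementary $\int z^{-(n-i)}\,dz$ integrals, produces $\expect{\maxV(\bn)} = n\bigl(\frac{n}{n-1}\bigr)^{n-1} - n$. Dividing by the universal expected auction revenue $n$ under the equal-revenue distribution gives $\gamma_n$ as a lower bound via Yao's principle.

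For the upper bound, I would invoke Theorem~\ref{th:characterization-unbounded} and reduce $\gamma_n$-attainability of $\maxV$ to the inductive structure used for $\F^{(2)}$ in the proof of Theorem~\ref{thm:F2achievable}. Write $\maxV_n$ for the benchmark on $n$ bids, set $G_n(\bz)=\maxV_{n+1}(1,\bz)$, and use the scaling/reduction argument of Claim~\ref{cl:G_n} to reduce $\gamma_{n+1}$-attainability of $\maxV_{n+1}$ to $\gamma_{n+1}$-attainability of $G_n$. By the computation of $\expect{\maxV(\bn)}$ above (restated via the rescaling trick used in Claim~\ref{cl:Gntight}), the full support $\support^n$ is tight for $G_n$ with ratio $\gamma_{n+1}$. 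Then set $H_n(\bz)=\max\bigl(0,n-\maxV_n(\bz)\bigr)$ so that $G_n=\maxV_n+H_n$ on $\support^n$, and reduce the inductive step to an FKG-type inequality as in Claim~\ref{cl:Hn}.

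The main obstacle, and the one to check carefully, is verifying that all structural properties of $\F^{(2)}$ that were invoked in the proof of Theorem~\ref{thm:F2achievable} genuinely carry over to $\maxV$: namely, that $\maxV$ is non-negative, monotonically increasing, symmetric, positively homogeneous of degree one, and invariant under increasing the largest bid (so that $G_n(\bz)=\maxV_n(\bz)$ whenever some coordinate of $\bz$ equals the maximum and exceeds $1$). The definition $\maxV(\bids)=\max_{1\le k<n} k\cdot b_{(k+1)}$ satisfies each of these. The only place where the specific formula for $\F^{(2)}$ entered the earlier proof was the choice of the constant $n+1$ inside $H_n$; replacing it by $n$ yields the correct constant for $\maxV$ because $\maxV_{n+1}(1,\bz)\ge n$ (attained by the supply-$n$ Vickrey auction with threshold $1$), whereas $\F^{(2)}_{n+1}(1,\bz)\ge n+1$. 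With this single substitution, the monotonicity of $H_n$ and the invariance under raising the largest bid still hold, so the FKG/Chebyshev step in the analogue of Claim~\ref{cl:Hn} goes through verbatim, closing the induction. Combined with the matching lower bound, this gives the claimed optimal ratio $\gamma_n$.
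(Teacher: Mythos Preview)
Your proposal is correct and follows essentially the same approach as the paper: for the lower bound you compute $\expect{\maxV(\bn)}$ via the auxiliary variables $F_{n,k}$, the recursion, Abel's identity, and integration exactly as the paper does; for the upper bound you invoke the characterization theorem and rerun the $\F^{(2)}$ induction with $H_n(\bz)=\max(0,n-\maxV_n(\bz))$ in place of $\max(0,n+1-F_n(\bz))$, which is precisely the substitution the paper makes. Your explicit verification of the structural properties of $\maxV$ (symmetry, homogeneity, invariance under raising the top bid) matches the paper's justification for why the $\F^{(2)}$ argument transfers verbatim.
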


\section{Concluding Remarks}

Our paper studies designing optimal competitive digital goods auctions.
The proof of the characterization theorem gives an explicit procedure that constructs an auction with the optimal competitive ratio. The construction, however, is rather abstract and may take exponential steps for some benchmarks. It is therefore an intriguing question to see if there is any simple form for the description of optimal auctions (with respect to, e.g., the $\F^{(2)}$ benchmark).

We remark that our characterization applies not only to symmetric benchmarks such as $\F^{(2)}$ but also non-symmetric benchmarks. In particular, Theorem~\ref{th:characterization} may better estimate the competitive ratio of the monotone-price benchmark $\M^{(2)}$ of Leonardi and Roughgarden~\cite{LR2012}.

All our discussions in this paper are only for the unlimited supply case. However, our results to some extent
carry over to the limited supply setting. In a limited supply setting, let $k$ be the number of units for sale and $\nbidder> k$ be the number of bidders.
Let $f: \R^{\nbidder} \rightarrow\R$ be any non-negative and monotone benchmark.
We define $f_1(\bidi[1],\ldots,\bidi[k])= f(\bidi[1],\ldots,\bidi[k-1],\bidi[k],\bidi[k],\ldots,\bidi[k])$ and $f_2(\bidi[1],\ldots,\bidi[k])=f(\bidi[1],\ldots,\bidi[k-1],\bidi[k],0,\ldots,0)$, for each $\bids=(\bidi[1],\ldots,\bidi[n])$ with $\bidi[1]\ge \bidi[2]\ge \cdots \ge \bidi[n]$.
That is, we `increase' and `decrease' in $f(\cdot)$ the last $n-k$ bids to $\bidi[k]$ and 0, respectively.
We have $f_1(b_1,\ldots,b_n)\ge f(b_1,\ldots,b_n) \ge f_2(b_1,\ldots,b_n)$ for any $\bids$ due to the monotonicity condition.
We note that $f_1(\cdot)$ and $f_2(\cdot)$ only depend on the highest $k$ bids and may be viewed as two other benchmarks.

Now we may find the optimal competitive ratios of unlimited supply auctions with $k$ bidders for both benchmarks $f_1(\cdot)$ and $f_2(\cdot)$ with the help of Theorem~\ref{th:characterization}. We note that the competitive ratios for the $f_1(\cdot)$ and $f_2(\cdot)$ benchmarks give upper and lower bounds, respectively, on the competitive ratio of $k$-unit auctions with respect to the benchmark $f(\cdot)$. Indeed, we may construct the following limited supply auction: given a vector of bids $\bids=(\bidi[1],\ldots,\bidi[n])$ with $\bidi[1]\ge \bidi[2]\ge \cdots \ge \bidi[n]$, we admit only the $k$ highest bids and then run the optimal competitive auction with respect to the $f_1(\cdot)$ benchmark. We note that the revenue of such an auction can only be higher than the respective revenue of the unlimited supply auction for $f_1(\cdot)$.

We observe that if $f(\cdot)$ only depends on the $k$ highest bids, then $f(\cdot)=f_1(\cdot)=f_2(\cdot)$ and their respective competitive ratios are the same. This is the case for the benchmark $\F^{(2,k)}$, considered previously in \cite{HartlineM05,GoldbergHKSW06} and defined as the optimal omniscient fixed price auction that sells between 2 and $k$ items. Therefore, without much work we obtain optimal competitive $k$-unit auctions with respect to the benchmark $f(\cdot)$. We note that for other benchmark functions, the question of designing optimal competitive $k$-unit auctions remains open; we leave this interesting question for the future work.

Finally, we remark that the type of questions that we have addressed in Theorem \ref{th:characterization} (namely, what are the worst-case distributions of the input for analysing competitive ratios of various benchmarks) is fundamental to our understanding and discovery of the optimal competitive ratios. We believe that this question should be added to the agenda of the areas in Theoretical Computer Science that use competitive analysis such as online algorithms and algorithmic mechanism design.

\section{Acknowledgments}
We thank Jason Hartline for helpful discussions and valuable comments.

\bibliographystyle{plain}
\bibliography{bibs,game}

\end{document}